\definecolor{celadon}{rgb}{0.67, 0.82, 0.59}
\definecolor{cblue}{rgb}{0.6, 0.73, 0.89}
\theoremstyle{remark}
\newtheorem{thm}{Theorem}
\newtheorem{prop}{Proposition}
\newtheorem{lem}{Lemma}
\newtheorem{defi}{Definition}
\newcommand{\norm}[1]{\left\lVert#1\right\rVert}
\begin{document}
	%
	\title{State Estimation over Worst-Case\\Erasure and Symmetric Channels with Memory}
	
	\author{\IEEEauthorblockN{Amir Saberi, Farhad Farokhi and Girish N.~Nair}
		\thanks{The authors are with the Department of Electrical and Electronic Engineering, University of Melbourne, VIC 3010, Australia (e-mails: asaberi@student.unimelb.edu.au,\{ffarokhi, gnair\}@unimelb.edu.au)} 
		\thanks{This work was supported by the Australian Research Council via Future Fellowship grant FT140100527, and by a McKenzie Postdoctoral Fellowship from the University of Melbourne.}
		\thanks{{\textcopyright} 2019. This manuscript version is made available under the CC-BY-NC-ND 4.0 license http://creativecommons.org/licenses/by-nc-nd/4.0/ }}
	
	\maketitle
	
	\begin{abstract}
		Worst-case models of erasure and symmetric channels are investigated, in which the number of channel errors occurring in each sliding window of a given length is bounded. Upper and lower bounds on their zero-error capacities are derived, with the lower bounds revealing a connection with the {\em topological entropy} of the channel dynamics. Necessary and sufficient conditions for linear state estimation with bounded estimation errors via such channels are then obtained, by extending previous results for non-stochastic memoryless channels to those with finite memory. These estimation conditions involve the topological entropies of the linear system and the channel.
	\end{abstract}
	
	\section{Introduction}
	When estimating the state of a linear system via a noisy memoryless channel, it is known that the relevant figure of merit for achieving estimation errors that are almost-surely uniformly bounded is the zero-error capacity $ C_0 $ of the channel, and not its ordinary capacity \cite{matveev2007shannon, Franceschetti2014}. 
	However, $ C_0 =0$ for common stochastic channel models, e.g. binary symmetric and binary erasure
	channels~\cite{korner1998zero}. This makes them unsuitable for modelling safety- or mission-critical applications that must respect hard guarantees at all times.
	Furthermore, $C_0$ depends only on the graph properties of the channel, not on the values of non-zero transition probabilities.
	
	These issues motivate the study of non-stochastic, or worst-case, channel models. Such models have received attention in the recent literature~\cite{nair2013nonstochastic, badr2017layered, wang2018end},
	and are useful when probabilistic information  about the channel noise is not available or when the noise itself is not random, e.g. in adversarial attacks. Using non-stochastic memoryless channels, it has been shown that $C_0$ still remains the relevant figure of merit for the purpose of obtaining uniformly bounded state estimation errors~\cite{nair2013nonstochastic}. 
	
	In this paper, non-stochastic models for erasure and symmetric channels are studied, whereby at most $d$ errors can occur in every sliding window of $n$ channel uses. Such models inherently have memory and therefore fall outside the class of channels considered in~\cite{nair2013nonstochastic, matveev2007shannon,shannon1956zero}. Upper bounds on $C_0$ for these channels (Theorems \ref{thm:nsec0f} and \ref{thm:nssc0f}) are derived by applying the dynamic programming equation in~\cite{zhao2010zero} for the zero-error {\em feedback} capacity of state-dependent channels with states available at transmitter and receiver. Novel lower bounds on $C_0$ are then derived in terms of the  {\em topological entropy} of the channel state dynamics (Theorems \ref{prop:nsel} and \ref{prop:nssl}). 
	Finally, the results of \cite{nair2013nonstochastic} on linear state estimation are extended to non-stochastic channels with {\em finite memory} (Theorem~\ref{thm:finitem}). This yields separate necessary and sufficient conditions for achieving uniformly bounded  estimation errors via such channels (Theorems \ref{thm:nseest} and \ref{thm:nssest}), in terms of the  topological entropies of the linear system and the channel. In a recent conference paper \cite{saberi2018estimation}, a worst-case {\em consecutive window} model of binary erasure channels was studied. Such a model can be made memoryless by a lifting argument, after which classical techniques can be applied. In contrast, the sliding-window models studied here are truly state-dependent, and require a different approach.

	Throughout the paper, $ q $ denotes the channel input alphabet size, logarithms are in base $q$, and coding rates and channel capacities are in symbols (or \emph{packets}) per channel use. The cardinality of a set is denoted by $ |\cdot| $. Define $ V_r^n(q):=1+{n \choose 1} (q-1) + \dots +{n \choose r}(q-1)^r=\sum_{i=0}^{r}{n \choose i}(q-1)^i$, where $ r \in \{0,1,\dots,n\} $. Let $ \mathbf{B}_l(u) $ be a $ l $-ball $ \{ v: \norm{u-v} \leq l\} $ centred at $ u $ with $ \norm{\cdot} $  denoting a norm on a finite-dimensional real vector space. 
	
	\section{Worst-Case Channel Models with Memory}
	
	The following two channels are studied in this paper.

	\begin{defi}[NSE channels] \label{def:nsechannel}
		A channel is called {\em $(n,d)$ non-stochastic sliding-window erasure~(NSE)} if each transmitted symbol is either received perfectly or erased, with at most $d$ erasures possible in every sliding window of past $ n $  transmissions. The receiver knows the locations of the erased symbols; however this information is not available to the sender.
	\end{defi}
	
	\begin{defi}[NSS channels]
		\label{def:nsschannel}	
		A channel is called {\em $(n,d)$ non-stochastic sliding-window symmetric~(NSS)} if the channel input and output alphabet sets are the same, each input can get mapped to any output symbol, and within every sliding-window of past $ n $  transmissions at most $ d $ errors (i.e. when the received symbols differ from the transmitted ones) can happen.
	\end{defi}
	
	The non-stochastic channels in Defs.~\ref{def:nsechannel} and~\ref{def:nsschannel}  generalize their stochastic counterparts, the binary erasure and symmetric channels. Here, instead of having a probability of error for every single use of channel, the number of errors that may occur over a sliding-window of length $n$ is upper bounded by a non-negative integer $ d $. The maximum error rate is then $d/n$.  Fig.~\ref{slidep} illustrates simple sliding-window erasure and symmetric channels with binary input alphabets, for the case of $ d =3$ and $ n =7$. 
	\begin{figure}[t]
		\includegraphics[width=60mm]{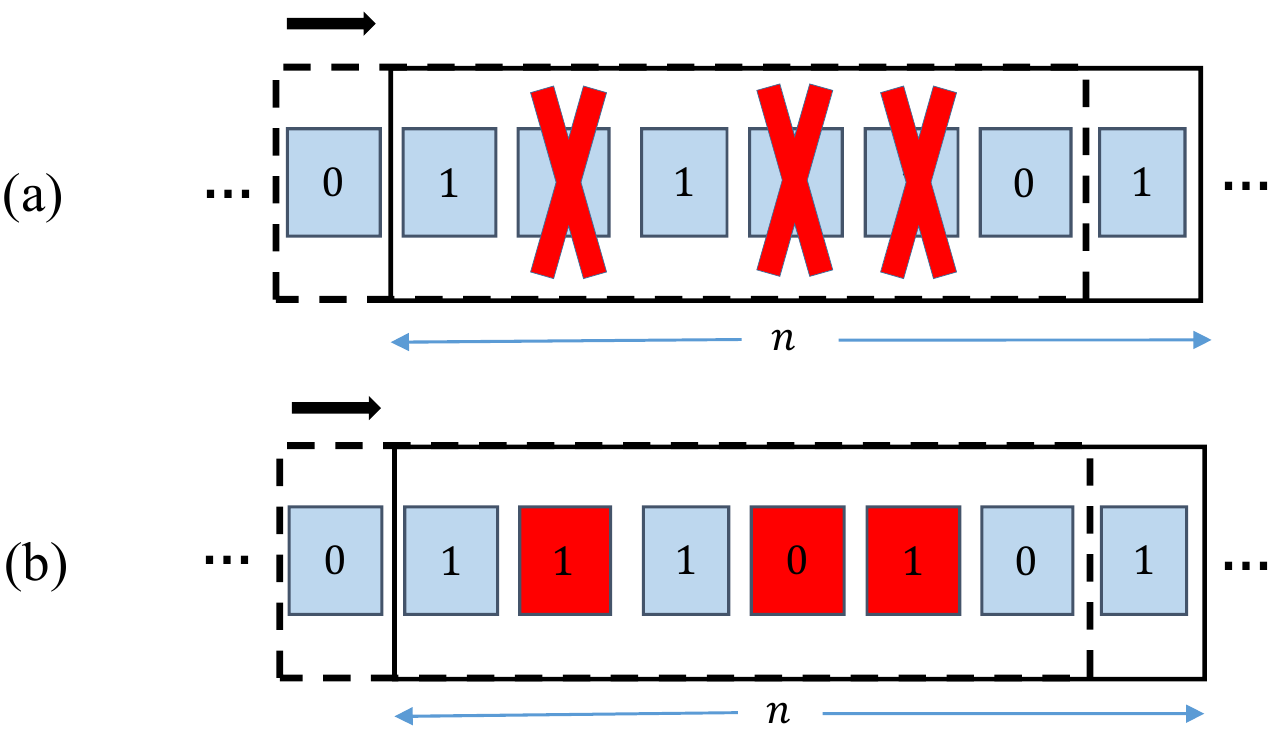}
		\centering
		\caption{Bounded error structure for binary $(7,3)$ NSE~(a) and NSS~(b) channels. }
		\label{slidep}
	\end{figure}
	The channel output depends on the errors in the previous window; thus these channels have memory.
	Equivalently, they may be represented as  \textit{state-dependent channels}, with a finite number of possible states.
	
	For an $(n,d)$ NSE channel, the current state of the channel is naturally represented as an $n$-bit word, with $*$ and $\circ$ respectively indicating the locations of erroneous and perfect transmissions in the previous window. 
	Let $\mathcal{S}$ denote the set of all possible channel states.
	For an $(n,d)$ NSE channel, combinatorial arguments easily show that  $ |\mathcal{S}|=V_d^n(2)=1+{n \choose 1} + \dots +{n \choose d}$.
	For example, a  $(3,1)$ NSE channel admits $|\mathcal{S}|=4$ states, as shown in Table~\ref{statetable}. 
	\begin{table}[t]
		\caption{States of a $(3,1)$ NSE channel}
		\centering
		\begin{tabular}{c c} 
			\hline
			\textbf{States} & Binary Representation \\  
			$ s_1 $ & $ \circ \circ \circ $  \\ 
			
			$ s_2 $ & $ \circ \circ  * $  \\
			
			$ s_3 $ & $ \circ  * \circ $  \\
			
			$ s_4 $ & $ * \circ \circ $   \\ \hline 
		\end{tabular}
		\label{statetable}
	\end{table}	
	
	Due to restrictions on the number of errors in each sliding window, not all states can be visited from any starting state in a single step. For example,  Fig.~\ref{fig:transition} shows the state transition diagram for the $(3,1)$ NSE channel.
	Let the current state of the channel be $ s_1 $. If no erasure occurs, the state of the channel remains the same; otherwise, the state transitions to $ s_2 $. In $ s_2 $, since the maximum allowed number of erasures has already occurred, the state must change to $ s_3 $. Similarly, in $ s_3 $, a  transition can occur only to $s_4$. However, in $ s_4 $, since the memory clears, another erasure may occur. Therefore, the state of the channel can transition to either $ s_1 $ or $ s_2 $. In Fig. \ref{fig:transition}, the red edges illustrate transitions in which an erasure occurs and the  black edges illustrate the error-free transitions. 
	\begin{figure}[t]
		\centering
		\begin{tikzpicture}[->, >=stealth', auto, semithick, node distance=1.7cm]
		\tikzstyle{every state}=[fill=white,draw=black,thick,text=black,scale=1]
		\node[state]   (S1)       {$  s_1 $};
		\node[state]    (S2)[left of=S1]   {$ s_2 $};
		\node[state]    (S3)[below of=S2]   {$ s_3 $};
		\node[state]    (S4)[right of=S3]   {$ s_4 $};
		\path
		(S1) edge[out=30,in=70,looseness=8] (S1)
		edge[red,bend right]   (S2)
		(S2) edge[bend right]   (S3)
		(S3) edge[bend right]   (S4)
		(S4) edge[bend right]   (S1)
		(S4) edge[red]   (S2);
		\end{tikzpicture}
		\caption{States transition diagram of a ($ d= $1,$ n= $3) NSE channel. Red line corresponds to erasure and black lines are for error-free transmission over the channel.}
		\label{fig:transition}
	\end{figure}
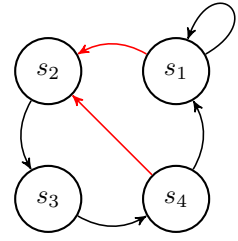
	
	For an $(n,d)$ NSS channel,  we define the state as a $q$-ary word of length $n$, in which $\circ$ indicates no error and $\circ',\circ'',\ldots$ label
	the $(q-1)$ erroneous symbol swaps that can occur.\footnote{Equivalently, by writing the channel input-output relationship as $Y_k=X_k + Z_k \mod q$, where $X_k, Y_k, Z_k \in\mathbb{Z}_q$ respectively denote the channel input, output and noise at time $k$, the current channel state is equivalent to $Z^{k-1}_{k-n}\in\mathbb{Z}_q^n$, with at most $d$ nonzero entries.}  This is not the most compact state representation; however, for a given input sequence it  yields a one-to-one relationship between the state and  output sequences, which will be useful in deriving lower bounds. The set $\mathcal{S}$ of possible states can be shown to be of size $V^n_d(q)$
	(by selecting $i=1,\ldots , d$ error locations, each with $q-1$ distinct possibilities, in a window of length $n$).
	
	To illustrate this, see Fig.~\ref{fig:trans2} for the possible states and transitions for a $(3,1)$ NSS channel with $q=3$.
	The state $s_1$ is error-free and can have $q=3$ transitions: (\textit{i}) there is no error, resulting in no change in the state of the channel, (\textit{ii}) there is an error with output $x_i+1\mod 3$, resulting in transition to state $s_2$, and (\textit{iii}) there is an error with output $x_i+2\mod 3$, resulting in transition to state $s_5$. Note that both $s_2$ and $s_5$ represent only one error; hence, in the case of NSE channel, they would have resulted in one state. Now, in state $s_2$, since it is not possible to have any more errors (because $d=1$), only one transition is possible, to state $s_3$. And so on. 
	
	\begin{figure}[t]
		\centering
		\begin{tikzpicture}[->, >=stealth', auto, semithick, node distance=1.7cm]
		\tikzstyle{every state}=[fill=white,draw=black,thick,text=black,scale=1]
		\node[state]   (S1)       {$  s_1 $};
		\node[state]    (S2)[below left of=S1]   {$ s_2 $};
		\node[state]    (S3)[left of=S1]   {$ s_3 $};
		\node[state]    (S4)[below of=S2]   {$ s_4 $};
		\node[state]    (S5)[below right of=S1]   {$ s_5 $};
		\node[state]    (S6)[right of=S1]   {$ s_6 $};
		\node[state]    (S7)[below of=S5]   {$ s_7 $};
		\path
		(S1) edge[out=70,in=110,looseness=8] (S1)
		edge[bend right,red]   (S2)
		edge[bend left,red]   (S5)
		(S2) edge[]   (S3)
		(S3) edge[bend right]   (S4)
		(S4) edge[]   (S1)
		(S4) edge[red]   (S2)
		(S4) edge[red]   (S5)
		(S7) edge[red]   (S2)
		(S5) edge[]   (S6)
		(S6) edge[bend left]   (S7)
		(S7) edge[]   (S1)
		(S7) edge[red]   (S5);
		\end{tikzpicture}
		\scalebox{.75}{\begin{tabular}{c c}
				State & Representation\\  
				\hline
				$ s_1 $ & $ \circ \circ \circ $  \\ 
				
				$ s_2 $ & $ \circ \circ  \circ' $  \\
				
				$ s_3 $ & $ \circ \circ'\circ $  \\
				
				$ s_4 $ & $ \circ'\circ \circ $   \\
				
				$ s_5 $ & $ \circ \circ \circ'' $  \\
				
				$ s_6 $ & $ \circ\circ'' \circ $  \\
				
				$ s_7 $ & $\circ'' \circ \circ $   \\\hline 
		\end{tabular}}
		\caption{States and transition diagram of $(3,1)$ NSS channel with $ q= 3$. Here $ \circ' $ and $ \circ'' $ denote possible swaps with respect to sent symbol.}
		\label{fig:trans2}
	\end{figure}
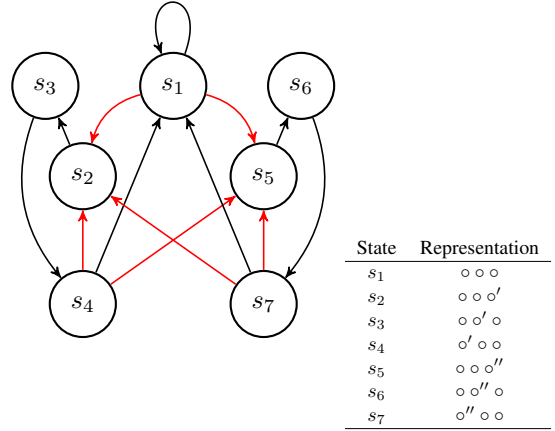
	
	Finally, the zero-error capacity $C_0$ is defined as the highest block-coding rate that permits zero decoding errors, i.e. 
	\begin{align}
	C_0:=\sup_{t \geq 0, \mathcal{F} \in \mathscr{F}}
	\frac{\log |\mathcal{F}|}{t+1}, \label{c0def}
	\end{align}
	where $\mathcal{X}$ is the input alphabet, and where $ \mathscr{F}\subseteq \mathcal{X}^{t+1}$  is the set of all block codes of length $ t+1 $ that yield zero decoding errors for any channel noise sequence and initial state.
	
	Note that for an NSE channel, $C_0$ is strictly positive if $d<n$. This is because the simple code $\{00\dots0,1\dots1,\dots, (q-1)\dots(q-1)\}$, in which every codeword is an $n$-repetition of each alphabet symbol, can send one symbol without error every $n$ transmissions, achieving a rate of $1/n$. Similarly, $C_0$ is positive for an NSS channel with $d<n/2$, since the same code achieves rate $1/n$.
	
	
	%
	
	\section{Upper Bounds using Zero-Error Feedback Capacity}
	Explicit formulas for the zero-error capacity typically do not exist except in special cases, even for memoryless channels. An upper bound on $C_0$ is the zero-error feedback capacity $C_{0f}$ with full feedback of past channel outputs back to the transmitter.  For discrete memoryless channels, $C_{0f}$ can be obtained through an optimization problem~\cite{shannon1956zero}. For state-dependent channels with causal state information at the transmitter and receiver, it has been shown that $C_{0f}$ can be obtained by solving the following sequential optimization problem~\cite{zhao2010zero}:
	\begin{align}
	C_{0f}= \liminf_{k \rightarrow \infty} \frac{1}{k} \min_{s \in \mathcal{S}} \log_q W(k,s)\label{c0Fs},
	\end{align}
	where $ \forall s \in \mathcal{S} $ and for $ k=1,2,3 , \dots $, $ W(k,s) $ is a mapping $ \mathbb{Z}^+ \times \mathcal{S} \mapsto \mathbb{R}^+ $ and is obtained iteratively (with initial value $ W(0,s)=1, \forall s \in \mathcal{S} $) from the dynamic programming (DP) equation in
	\begin{align}
	\begin{split}
	W(k,s)= \max_{P_{X|S}} &\min_{s' \in \mathcal{S}} \Bigg\{ W(k-1,s') \\ \times &\bigg(\max_{y \in \mathcal{Y}} \sum_{x \in \mathcal{G}(y,s'|s)}P_{X|S}(x|s)\bigg)^{-1} \Bigg\} \label{wit},
	\end{split}
	\end{align}
	with $ P_{X|S}(\cdot|\cdot) $ being a probability mass function on $\mathcal{X} $ for each state  $ s \in \mathcal{S} $. The subset of the inputs that can result in the output $y$ is denoted by $  \mathcal{G}(y,s'|s)=\{x| x \in \mathcal{X}, P_X(y,s'|x,s)>0\} $, in which $ s $ is the current state and $ s' $ is the next state of the channel. As an example, in NSE channels, $ \mathcal{G}(y,s'|s) =\{y\} $ if no erasure occurs and $\mathcal{G}(y,s'|s)= \mathcal{X} $ otherwise. This is because, for each transmission, each input gets uniquely mapped to an output if no erasure happens and $\mathcal{X}$ is the set of all possible inputs  if an erasure happens.
	
	In the following subsections, the zero-error feedback capacities of NSE and NSS channel models are investigated separately.
	\subsection{NSE channel}
	In the NSE channel, the state is revealed by the output sequence. Thus, in presence of an error-free feedback channel, the state is known to the encoder and the decoder, and we can apply the techniques of \cite{zhao2010zero} to yield the following formula.
	\begin{thm} \label{thm:nsec0f} The zero-error feedback capacity of an $(n,d)$ NSE channel is 
		\begin{align}
		C_{0f} =1-\frac{d}{n}.\label{nsec0f}
		\end{align}
	\end{thm}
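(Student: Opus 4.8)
The plan is to specialise the dynamic‑programming recursion \eqref{wit} to the NSE channel and recognise $\log_q W(k,s)$ as a minimum‑cost walk length on the state‑transition graph. Recall that from any state $s$ there is always an error‑free successor $s'_\circ$ (obtained by shifting in a $\circ$) and, whenever the sliding‑window budget permits, an erasure successor $s'_*$ (obtained by shifting in a $*$); every other $s'\in\mathcal S$ is unreachable from $s$ in one step and so does not constrain the inner minimum in \eqref{wit}. Using the stated form of $\mathcal G(y,s'|s)$ — the singleton $\{y\}$ for the error‑free transition and all of $\mathcal X$ for the erasure transition — the bracketed term in \eqref{wit} equals $W(k-1,s'_\circ)\big/\max_{y}P_{X|S}(y|s)$ for $s'=s'_\circ$, and $W(k-1,s'_*)$ for $s'=s'_*$. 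Only the first depends on $P_{X|S}$, and it is largest when $\max_{y}P_{X|S}(y|s)$ is smallest, i.e.\ for the uniform input distribution; so the outer maximum is attained there and
\begin{equation*}
W(k,s)=\begin{cases}\min\{\,q\,W(k-1,s'_\circ),\;W(k-1,s'_*)\,\},& s'_*\text{ exists},\\ q\,W(k-1,s'_\circ),& \text{otherwise},\end{cases}
\end{equation*}
with $W(0,s)=1$. Setting $w(k,s):=\log_q W(k,s)$, this recursion says precisely that $w(k,s)$ is the minimum, over all length‑$k$ walks in the state‑transition graph issuing from $s$, of the number of error‑free edges traversed, where error‑free edges have weight $1$ and erasure edges weight $0$.

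With this reformulation, \eqref{c0Fs} reads $C_{0f}=\liminf_{k\to\infty}\tfrac1k\min_{s\in\mathcal S}w(k,s)$, and I would bound it from both sides by elementary counting. \emph{Lower bound.} Any length‑$k$ walk (with $k\ge n$) certifies that every block of $n$ consecutive transmissions along it is an admissible sliding window and hence contains at most $d$ erasure edges; partitioning the $k$ steps into $\lceil k/n\rceil$ such blocks shows the walk uses at most $d\lceil k/n\rceil\le \tfrac dn k+d$ erasure edges, hence at least $\tfrac{n-d}{n}k-d$ error‑free ones. Thus $w(k,s)\ge\tfrac{n-d}{n}k-d$ for every $s$, giving $C_{0f}\ge 1-\tfrac dn$. \emph{Upper bound.} It suffices to exhibit one walk of erasure density $d/n$: from the all‑$\circ$ state, erase $d$ times (reaching $\circ^{\,n-d}*^{\,d}$), then transmit error‑free $n-d$ times (reaching $*^{\,d}\circ^{\,n-d}$), then erase $d$ more times (returning to $\circ^{\,n-d}*^{\,d}$), and repeat; each period has length $n$ and cost $n-d$, so this walk's cost over $k$ steps is at most $\tfrac{n-d}{n}k+n$. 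Hence $\min_{s}w(k,s)\le\tfrac{n-d}{n}k+n$ and $C_{0f}\le 1-\tfrac dn$. Combining the two estimates yields \eqref{nsec0f}.

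The routine parts are the two window‑counting inequalities and the verification that the exhibited periodic pattern is admissible (after each step the current state carries at most $d$ stars). The step requiring the most care is the first one: showing that the minimisation over $s'\in\mathcal S$ in \eqref{wit} genuinely collapses onto the (at most) two reachable successors, and that the uniform input distribution attains the outer maximum — in particular that raising any single input probability above $1/q$ can only shrink the error‑free branch while leaving the erasure branch untouched — so that the DP value is indeed a shortest‑walk count. It is also worth checking the degenerate cases $d=0$ (no erasures, $C_{0f}=1$) and $d=n$ (the erase‑always self‑consistent cycle has cost $0$, so $C_{0f}=0$); both agree with \eqref{nsec0f}.
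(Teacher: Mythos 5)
Your proposal is correct, and its first half coincides with the paper's: both specialise the DP recursion \eqref{wit} to the two reachable successors, observe that unreachable $s'$ contribute nothing to the inner minimum, and show the uniform input attains the outer maximum, yielding the per-edge gains $q$ (error-free) and $1$ (erasure). Where you diverge is in the asymptotic analysis. The paper identifies the minimising state ($s_1$, the all-$\circ$ state) and argues that the DP trajectory greedily follows erasure edges whenever available, eventually entering a length-$n$ loop through the saturated states $\mathcal{S}_m$ with cumulative gain $q^{n-d}$; this rests on the (unproven, though here harmless) claim that the pointwise minimum in \eqref{wit} always selects the erasure branch, i.e.\ that $W(k-1,s_e)\le qW(k-1,s_s)$ at every step. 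You instead recast $\log_q W(k,s)$ as the minimum number of error-free edges over all admissible length-$k$ walks from $s$ and sandwich it: the sliding-window constraint forces at least $\tfrac{n-d}{n}k-d$ error-free edges on \emph{every} walk (lower bound on $C_{0f}$), while one explicit periodic walk of erasure density exactly $d/n$ gives the matching upper bound. This two-sided counting argument never needs to identify the optimal trajectory or justify the greedy selection, so it is somewhat more robust than the paper's route; the paper's version, in exchange, exhibits the exact minimiser and the exact value of $\min_s W(k,s)$ for every finite $k$, not just the growth rate. Both arrive at \eqref{nsec0f}, and your degenerate-case checks ($d=0$, $d=n$) are consistent with it.
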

	\begin{proof}
		See Appendix \ref{sec:appd}.
	\end{proof}
	Theorem~\ref{thm:nsec0f} states that the zero-error capacity of the NSE channel with feedback coincides with the minimum fraction of the packets that may be successfully received, given by $(n-d)/n$. 
	
	\subsection{NSS channel}
	In contrast to the NSE channel, the receiver cannot determine what errors occured from observing the output sequence. In other words, the channel states are  not known to the decoder. However, the technique of \cite{zhao2010zero} can still be used to obtain an upper bound on $ C_{0f} $, by gifting the decoder with knowledge of the states.
	
	\begin{thm} \label{thm:nssc0f} The zero-error feedback capacity of a $(n,d)$ NSS channel with $ q $-ary input alphabet is bounded by
		\begin{align}
		C_{0f} \leq 1-\frac{d}{n} \log_q(q-1),\label{nssc0f}
		\end{align}
		if $ d<n/2 $ and $ C_{0f}=0 $ when $ d \geq n/2 $.
	\end{thm}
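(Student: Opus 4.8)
The plan is to treat the two regimes of the statement separately. For $d<n/2$ I would apply the dynamic‑programming characterization \eqref{c0Fs}--\eqref{wit} of \cite{zhao2010zero}, but to a \emph{relaxed} channel in which the decoder (and hence, through the feedback link, the encoder) is additionally handed the \emph{locations} of the corrupted symbols. Extra side information can only enlarge the zero‑error feedback capacity, so the capacity of this relaxed channel is a legitimate upper bound on $C_{0f}$ of the NSS channel, and it is exactly what \eqref{c0Fs}--\eqref{wit} computes. For $d\geq n/2$ the DP relaxation is no longer tight, and I would instead give a direct adversarial (confusability) argument showing $C_{0f}=0$.

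For the relaxed channel the natural state is precisely the NSE‑type state: the length‑$n$ binary pattern of error locations in the previous window, so $|\mathcal S|=V_d^n(2)$, and this state is causally known at both ends (with the usual one‑step feedback delay), so \eqref{c0Fs}--\eqref{wit} applies. The key local computation is that $\mathcal G(y,s'|s)=\{y\}$ (size $1$) on an error‑free transition, while $\mathcal G(y,s'|s)=\mathcal X\setminus\{y\}$ (size $q-1$) on an error transition, since a corrupted symbol whose location is known could have come from any of the other $q-1$ inputs. I would then argue that the uniform input law is optimal in \eqref{wit}: the only functionals of $P_{X|S}(\cdot|s)$ appearing there are $\max_x P_{X|S}(x|s)$ (on error‑free successors) and $\min_x P_{X|S}(x|s)$ (on error successors), and the uniform law simultaneously minimizes the former and maximizes the latter. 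The recursion then collapses to $W(k,s)=\min_{s'}W(k-1,s')\,r(s,s')$ with $r=q$ on an error‑free step and $r=q/(q-1)$ on an error step. Because an error‑free successor is always available, unrolling gives $W(k,s)=q^{\,k}(q-1)^{-E(k,s)}$, where $E(k,s)$ is the largest number of error steps on a length‑$k$ state path out of $s$; this is maximized by taking $s$ to be the all‑error‑free state, where $E(k,s)$ equals the maximum number of $1$'s in a length‑$k$ binary word with at most $d$ ones per window of $n$, namely $\lfloor k/n\rfloor d+\min(d,k\bmod n)\sim (d/n)k$. Substituting $\min_s\log_q W(k,s)\leq k-E(k,s)\log_q(q-1)$ into \eqref{c0Fs} yields $C_{0f}\leq 1-\tfrac dn\log_q(q-1)$. (This bound actually holds for every $d$; it simply ceases to be tight once $d\geq n/2$.)

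For $d\geq n/2$ I would prove $C_{0f}=0$ by showing no zero‑error feedback code can carry two messages. Take distinct encoder strategies $m\neq m'$ and the all‑error‑free initial state, and let the adversary build a common output sequence $\tilde y$ online: running both strategies against the shared prefix $\tilde y^{k-1}$ produces symbols $a_k,a'_k$; at steps where $a_k=a'_k$ it sets $\tilde y_k=a_k$ with no error, and at each ``disagreement'' step it sets $\tilde y_k\in\{a_k,a'_k\}$, which charges exactly one error to exactly one of the two transmissions. Colouring the disagreement steps alternately (online) between the two transmissions, any window of $n$ consecutive times carries at most $\lceil n/2\rceil\leq d$ errors for each transmission, so both induced noise sequences are admissible. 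Hence the decoder sees the same $\tilde y$ under $m$ and under $m'$ and cannot separate them, forcing $|\mathcal F|=1$ for every block length and $C_{0f}=0$, matching \eqref{nssc0f}.

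The step I expect to be the main obstacle is the $d\geq n/2$ case: one must be careful that the confusability construction is compatible with \emph{feedback} (the encoder outputs $a_k,a'_k$ depend adaptively on the shared past output), that the adversary's colour choices are made online, and that the sliding‑window bookkeeping — including the error‑free initial state and the partial windows at the start — is respected; the alternating‑colour device is exactly what makes all of this go through with budget $\lceil n/2\rceil\leq d$. The $d<n/2$ part is more mechanical; there the only non‑routine points are checking that the location‑aided relaxation genuinely fits the state‑dependent framework of \cite{zhao2010zero} and that the uniform input law is optimal in \eqref{wit}, after which the capacity value falls out of a standard extremal count on sliding‑window‑constrained binary strings.
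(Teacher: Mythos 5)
Your proof of the main bound ($d<n/2$) follows essentially the same route as the paper: relax the channel by revealing the error locations to the decoder (hence, via feedback, to the encoder), apply the DP characterization \eqref{c0Fs}--\eqref{wit} with $\mathcal G(y,s'|s)=\{y\}$ on error-free transitions and $\mathcal X\setminus\{y\}$ on error transitions, identify the per-edge gains $q$ and $q/(q-1)$, and extract the asymptotic exponent $1-\tfrac dn\log_q(q-1)$ from the worst-case (maximum-error) state path. Two of your details are in fact cleaner than the paper's: you observe that the uniform input law simultaneously optimizes both functionals (so the max-min is achieved exactly, where the paper settles for a max-min $\le$ min-max inequality), and your closed form $W(k,s)=q^{k}(q-1)^{-E(k,s)}$ with $E$ the maximal number of error steps replaces the paper's loop-through-$\mathcal S_m$ argument while giving the same limit. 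The real divergence is the degenerate case $d\ge n/2$: the paper disposes of it in one sentence (``all inputs are adjacent''), which for a channel with memory leaves open whether the adversary can sustain the confusion indefinitely under the sliding-window budget. Your explicit construction — build a common output $\tilde y$ online against both feedback strategies, and colour the disagreement steps alternately so that each transmission is charged at most $\lceil n/2\rceil\le d$ errors in every window (the disagreement steps inside any window form a contiguous block of the disagreement subsequence, so the alternation really does cap the per-window count) — supplies exactly the missing argument and is compatible with feedback since both encoders are driven by the same fed-back $\tilde y$. So your proposal is correct, matches the paper on the main bound, and is strictly more complete on the $d\ge n/2$ case.
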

	\begin{proof}
		See Appendix \ref{sec:appd1}.
	\end{proof}
	
	\section{Lower Bounds using Topological Entropy}\label{sec:conv}
	In this section, the  dynamics of the channel state transition diagrams are investigated, revealing a connection between zero-error capacity and the concept of {\em topological entropy} in dynamical systems theory. 
	
	Let $ s_0 $ and $ x^n=x_1x_2\dots x_n $ denote the starting state and input sequence, respectively. Define the state transition matrix $ \mathcal{A} \in \{0,1\}^{|\mathcal{S}|\times |\mathcal{S}|}$ such that the $(s,s')$th entry $\mathcal{A}_{s,s'}$ equals 1 if the state of the channel can transition from $s$ to $s'$, and equals 0 otherwise. For the case of Fig.\ref{fig:transition}, it can be seen that
	\[\mathcal{A} =\begin{bmatrix}
	1 &1 &0&0\\
	0&0&1&0\\
	0&0&0&1\\
	1&1&0&0
	\end{bmatrix}.\]
	\begin{figure}
		\centering
		\tikzstyle{level 1}=[level distance=20mm, sibling distance=45mm]
		\tikzstyle{level 2}=[level distance=20mm, sibling distance=35mm]
		\tikzstyle{level 3}=[level distance=20mm, sibling distance=20mm]
		\tikzstyle{level 4}=[level distance=15mm, sibling distance=12mm]
		\tikzstyle{level 5}=[level distance=15mm]
		\scalebox{.85}{\begin{tikzpicture}[grow=right,->,>=angle 60,scale=.8]
			\node {$s_1$}
			child {node {$s_2$}
				child {node {$s_3$}
					child {node {$s_4$}
						child {node {$ s_2 $}
							child {node {$ s_3 $}
								child {node {$s_4$}
									edge from parent
									node[above] {$a_6$}} 
								edge from parent
								node[above] {$ a_5 $}}
							edge from parent
							node[above] {$ * $}
						}
						child {node {$ s_1 $}
							child {node {$ s_2 $}
								child {node {$ s_3 $}
									edge from parent
									node[above] {$ a_6 $}}
								edge from parent
								node[above] {$ * $}
							}
							child {node {$ s_1 $}
								child {node {$ s_2 $}
									edge from parent
									node[above] {$ * $}
								}
								child {node {$ s_1 $}
									edge from parent
									node[above] {$ a_6 $}
								}
								edge from parent
								node[above] {$ a_5 $}
							}
							edge from parent
							node[above] {$ a_4 $}
						}
						edge from parent
						node[above] {$a_3$}}  
					edge from parent
					node[above] {$a_2$}
				}
				edge from parent
				node[above] {$*$}
			}
			child {node {$s_1$}
				child {node {$s_2$}
					child {node {$s_3$}
						child {node {$s_4$}
							child {node {$ s_2 $}
								child {node {$ s_3 $}
									edge from parent
									node[above] {$ a_6 $}}
								edge from parent
								node[above] {$ * $}
							}
							child {node {$ s_1 $}
								child {node {$ s_2 $}
									edge from parent
									node[above] {$ * $}
								}
								child {node {$ s_1 $}
									edge from parent
									node[above] {$ a_6 $}
								}
								edge from parent
								node[above] {$ a_5 $}
							}
							edge from parent
							node[above] {$a_4$}}  
						edge from parent
						node[above] {$a_3$}
					}
					edge from parent
					node[above] {$*$}
				}
				child {node{$s_1$}
					child {node {$s_2$}
						child {node {$s_3$}
							child {node {$s_4$}
								child{ node{$ s_2 $}
									edge from parent
									node[above] {$*$}}
								child{ node{$ s_1 $}
									edge from parent
									node[above] {$a_6$}}
								edge from parent
								node[above] {$a_5$}}  
							edge from parent
							node[above] {$a_4$}
						}
						edge from parent
						node[above] {$*$}
					}
					child {node{$ s_1 $}
						child {node {$s_2$}
							child {node {$s_3$}
								child {node {$s_4$}
									edge from parent
									node[above] {$a_6$}} 
								edge from parent
								node[above] {$a_5$}
							}
							edge from parent
							node[above] {$*$}
						}
						child {node{$s_1$}
							child {node{$ s_2 $}
								child {node {$ s_3 $}
									edge from parent
									node[above] {$ a_6 $}}
								edge from parent
								node[above] {$*$}}
							child {node{$ s_1 $}
								child {node {$ s_2 $}
									edge from parent
									node[above] {$ * $}
								}
								child {node {$ s_1 $}
									edge from parent
									node[above] {$ a_6 $}
								}
								edge from parent
								node[above] {$a_5$}}
							edge from parent
							node[above] {$a_4$}
						}
						edge from parent
						node[above] {$a_3$}}
					edge from parent
					node[above] {$a_2$}
				}
				edge from parent         
				node[above] {$a_1$}
			};
			\end{tikzpicture}}
		\caption{Possible state trajectories of $(3,1)$ NSE channel with input sequence $ x^6=a_1a_2a_3a_4a_5a_6 $.}
		\label{fig:pos}
	\end{figure}
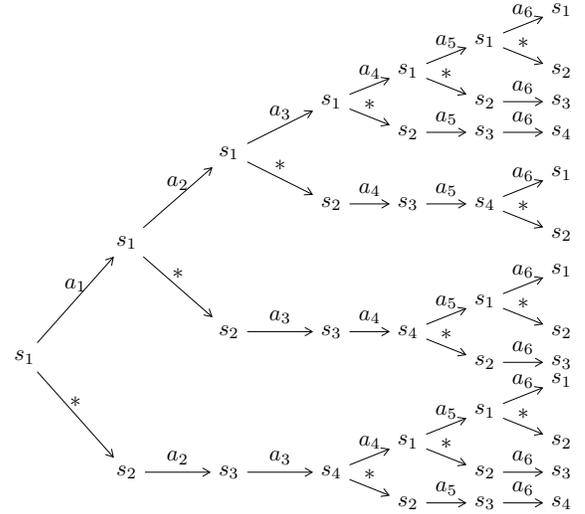
	
	Fig. \ref{fig:pos} depicts possible state transitions and related output sequences for the example channel in Fig.~\ref{fig:transition}.

	In symbolic dynamics, topological entropy  is defined as the asymptotic growth rate of the number of possible  state sequences. For a finite-state machine with an irreducible transition matrix $\mathcal{A}$, the topological entropy (in base $q$) is known to coincide with  $\log\lambda_{PF}$ , where $\lambda_{PF}$ is the Perron-Frobenius eigenvalue of $\mathcal{A}$~\cite{lind1995introduction}. This is essentially due to the fact that the number of the paths from state $ s_i $ to $ s_j $ in $ N$ steps is the $ (i,j) $-th element of $ \mathcal{A}^N$, which grows like $\lambda_{PF}^N$ for large $N$.

	For a given initial state $s_0\in\mathcal{S}$, define the binary indicator vector $ z_0\in\{0,1\}^{|\mathcal{S}|} $ consisting of all zeros except for a 1 in the position corresponding to $s_0$; e.g. in Fig.\ref{fig:transition}, if starting from state $ s_1 $, then $z_0 =[1,0,0,0]$. Let $ \mathscr{Y}(s_0,x^N) $ denote the set of all the output sequences that can occur by  transmitting the input sequence $ x^N $ from initial channel state $ s_0 $. Observe that since each output of an NSE channel (which can be a correctly received symbol or an erasure) triggers a different state transition, each sequence of state transitions has a one-to-one correspondence to the output sequence,  given the input sequence. 
	
	Based on these observations, we have the following result:
	
	
	\begin{prop}\label{prop:out}
		For any finite-state channel with an irreducible transition matrix $\mathcal{A}$, there is a positive constant
		$ \beta $ such that 
		\begin{align}
		|\mathscr{Y}(s_0,x^N) |=z_0 \mathcal{A}^N \mathbbm{1} \leq   \beta  \lambda_{PF}^N , \label{outlam}
		\end{align}
		where $ \mathbbm{1} $ is a vector of ones with appropriate dimension.
	\end{prop}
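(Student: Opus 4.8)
The plan is to split the statement into two essentially independent parts: a combinatorial identity turning $|\mathscr{Y}(s_0,x^N)|$ into an entry-sum of $\mathcal{A}^N$, and a Perron-Frobenius growth estimate bounding that entry-sum.

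First I would prove the equality $|\mathscr{Y}(s_0,x^N)|=z_0\mathcal{A}^N\mathbbm{1}$. By the observation immediately preceding the proposition, once the input block $x^N$ is fixed the output sequence determines, symbol by symbol, which edge of the state-transition diagram is traversed, and conversely any admissible walk of length $N$ in that diagram together with $x^N$ reconstructs a unique output sequence; hence $\mathscr{Y}(s_0,x^N)$ is in bijection with the set of length-$N$ directed walks that start at $s_0$ in the graph with adjacency matrix $\mathcal{A}$. Since the number of such walks from $s$ to $s'$ equals $(\mathcal{A}^N)_{s,s'}$ (a routine induction on $N$ from the definition of $\mathcal{A}$ as the one-step transition matrix), summing over all terminal states $s'$ and selecting the $s_0$-row by means of the indicator vector $z_0$ gives $|\mathscr{Y}(s_0,x^N)|=\sum_{s'\in\mathcal{S}}(\mathcal{A}^N)_{s_0,s'}=z_0\mathcal{A}^N\mathbbm{1}$; in particular this count does not depend on $x^N$.

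Next I would bound $z_0\mathcal{A}^N\mathbbm{1}$. As $\mathcal{A}$ is nonnegative and irreducible, the Perron-Frobenius theorem supplies $\lambda_{PF}>0$ together with a right eigenvector $w$ having strictly positive entries and $\mathcal{A}w=\lambda_{PF}w$. Because $\mathcal{S}$ is finite and $w$ is entrywise positive, one has $\mathbbm{1}\le c\,w$ entrywise with $c:=1/\min_{s\in\mathcal{S}}w_s<\infty$; multiplying this inequality on the left by the nonnegative matrix $\mathcal{A}^N$ preserves it, so $\mathcal{A}^N\mathbbm{1}\le c\,\mathcal{A}^N w=c\,\lambda_{PF}^N w$. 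Multiplying again by the nonnegative row vector $z_0$ and using $z_0w=w_{s_0}\le\max_{s}w_s$ then yields $z_0\mathcal{A}^N\mathbbm{1}\le\beta\,\lambda_{PF}^N$ with $\beta:=(\max_{s}w_s)/(\min_{s}w_s)$, a constant depending only on $\mathcal{A}$; in particular it is independent of $N$ and $s_0$.

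I do not anticipate a serious obstacle here; the argument is essentially bookkeeping. The two points that warrant care are (i) confirming that the map between output sequences and state walks is a genuine bijection, so that the claimed relation holds with equality rather than just ``$\le$'' --- this rests on the fact, established earlier, that every admissible noise pattern triggers a distinct transition sequence and that the walk, together with $x^N$, recovers the output --- and (ii) ensuring that $\beta$ can be chosen independently of the initial state and the block length, which the eigenvector comparison above delivers automatically.
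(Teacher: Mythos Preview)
Your argument is correct. The equality $|\mathscr{Y}(s_0,x^N)|=z_0\mathcal{A}^N\mathbbm{1}$ is handled exactly as in the paper, via the bijection between output sequences and length-$N$ walks in the state graph.

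Where you differ is in the growth estimate. The paper puts $\mathcal{A}$ into Jordan form $T\operatorname{diag}(\lambda_{PF},J)T^{-1}$, expands $z_0\mathcal{A}^N\mathbbm{1}$ in this basis, and then argues that the subdominant block $\lambda_{PF}^{-N}J^N$ stays bounded because $\lambda_{PF}$ is the spectral radius. Your route is more elementary: you never leave nonnegative-matrix land, sandwiching $\mathbbm{1}$ against the strictly positive Perron eigenvector $w$ and propagating the entrywise inequality through $\mathcal{A}^N$. This avoids Jordan blocks, similarity transforms, and any discussion of the remaining spectrum, and it yields an explicit constant $\beta=\max_s w_s/\min_s w_s$ that is manifestly uniform in $N$ and $s_0$. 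The paper's decomposition, on the other hand, makes the asymptotics $z_0\mathcal{A}^N\mathbbm{1}\sim c\,\lambda_{PF}^N$ visible (and would also deliver a matching lower bound), information your comparison argument does not extract; but since only the upper bound is asserted in the proposition, your shorter proof suffices.
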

	
	\begin{proof}
		See Appendix \ref{app:pflam}.
	\end{proof}
	We now relate the zero-error capacity of the channel to its topological entropy.
	
	\begin{thm}[NSE bound via topological entropy]\label{prop:nsel}
		The zero-error capacity of an $(n,d)$ NSE channel with topological entropy $h_{ch}$ is lower-bounded by
		\begin{align}
		C_0 &\geq 1-\frac{d}{n}- h_{ch} \label{nselamd}
		\end{align}
	\end{thm}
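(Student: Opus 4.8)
The plan is to prove \eqref{nselamd} by an explicit counting (Gilbert--Varshamov type) construction of a zero-error block code of length $N$, and then let $N\to\infty$ and invoke the definition \eqref{c0def} of $C_0$ as a supremum over block lengths. The conceptual point is that, for an NSE channel, the channel acts on a codeword $x^N$ solely by applying an \emph{erasure pattern} $e\subseteq\{1,\dots,N\}$, and --- by Definition~\ref{def:nsechannel} --- the receiver observes $e$ exactly; hence the number of distinct channel behaviours over a block of length $N$ equals the number of admissible erasure patterns, which is precisely the quantity controlled by Proposition~\ref{prop:out} in terms of $\lambda_{PF}$, i.e.\ of $h_{ch}$.

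\textbf{Step 1 (confusability criterion).} Let $\mathcal{E}_N$ be the set of erasure patterns over $\{1,\dots,N\}$ that are admissible for \emph{some} initial state, i.e.\ that have at most $d$ erasures in every length-$n$ window (including the window straddling the unknown past). Since each output symbol of an NSE channel triggers a distinct state transition, for a fixed input $x^N$ and initial state $s_0$ the map sending an admissible erasure pattern to the resulting output is a bijection onto $\mathscr{Y}(s_0,x^N)$; hence $|\mathcal{E}_N|\le\sum_{s_0\in\mathcal{S}}|\mathscr{Y}(s_0,x^N)|\le|\mathcal{S}|\,\beta\,\lambda_{PF}^N$ by Proposition~\ref{prop:out}. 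Two distinct codewords $x^N,\tilde x^N\in\mathcal{X}^N$ are \emph{confusable} exactly when there is some $e\in\mathcal{E}_N$ on whose complement they agree: then each of them, transmitted with erasure pattern $e$, yields the same output, and the decoder --- which sees the erased coordinates $e$ --- cannot tell them apart. Thus a zero-error code of length $N$ is precisely an independent set in the confusability graph on $\mathcal{X}^N$. Note only a \emph{single} pattern $e$ enters here, because the receiver knows the erasure locations; this is what will give the clean exponent $1-d/n-h_{ch}$ rather than a version with a factor of two.

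\textbf{Step 2 (Gilbert--Varshamov bound).} Every $e\in\mathcal{E}_N$ has $|e|\le d\lceil N/n\rceil$ (partition $\{1,\dots,N\}$ into $\lceil N/n\rceil$ blocks of length $\le n$, each carrying $\le d$ erasures), and for a fixed $e$ there are exactly $q^{|e|}$ words agreeing with a given $x^N$ off $e$; hence the closed confusability neighbourhood of any $x^N$ has size at most $|\mathcal{E}_N|\,q^{d\lceil N/n\rceil}\le|\mathcal{S}|\,\beta\,\lambda_{PF}^N\,q^{d\lceil N/n\rceil}$. A greedy selection --- which yields an independent set of size at least $|\mathcal{X}^N|$ divided by the maximum closed-neighbourhood size --- therefore produces a zero-error code $\mathcal{F}\subseteq\mathcal{X}^N$ with
\[
|\mathcal{F}|\ \ge\ \frac{q^{N}}{|\mathcal{S}|\,\beta\,\lambda_{PF}^{N}\,q^{\,d\lceil N/n\rceil}}.
\]
Applying $\tfrac1N\log_q(\cdot)$, using $h_{ch}=\log_q\lambda_{PF}$, and letting $N\to\infty$ shows that the achievable zero-error rate tends to $1-\tfrac dn-h_{ch}$; since by \eqref{c0def} $C_0$ is the supremum over block lengths of such rates, \eqref{nselamd} follows.

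The only place that requires care is Step~1: verifying that, because the receiver knows the erasure locations, confusability of a codeword pair is witnessed by one admissible pattern $e$ (so that the ``ball'' around a codeword has size $|\mathcal{E}_N|\,q^{\max|e|}$ rather than a quadratic-in-$|\mathcal{E}_N|$ object), and identifying $|\mathcal{E}_N|$ with the path-count of Proposition~\ref{prop:out}. Everything else is routine counting; one should also observe in passing that the NSE transition matrix is irreducible (from any state, $n$ error-free steps reach the all-$\circ$ state, from which every admissible state is reachable), so that Proposition~\ref{prop:out} applies and $h_{ch}=\log_q\lambda_{PF}$ is well defined.
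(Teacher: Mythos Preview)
Your proposal is correct and follows essentially the same Gilbert--Varshamov counting argument as the paper: bound the confusability neighbourhood of a codeword by (number of admissible erasure patterns)$\times q^{\text{max erasures}}$, control the first factor via Proposition~\ref{prop:out}, and take $N\to\infty$. The only cosmetic difference is that the paper disposes of the unknown initial state by a separate lemma showing the all-$\circ$ state $s_1$ dominates (so it suffices to bound $|\mathscr{Y}(s_1,x^N)|$), whereas you absorb the issue by summing over all $|\mathcal{S}|$ initial states---either way the extra factor vanishes in the limit.
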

	\begin{proof}
		See Appendix \ref{app:nselw}.
	\end{proof}
	
	\textbf{Remarks:} The topological entropy $h_{ch}$ can be viewed as the rate at which the channel dynamics generate uncertainty. Intuitively, this uncertainty cannot increase the zero-error capacity of the channel, which explains why it appears as a negative term on the RHS.
	
	There are various results that bound $ h_{ch} = \log_q\lambda_{PF}$. For instance, for any graph with maximum degree $ d_{max} $ and average degree $ d_{ave} $, we have $d_{ave} \leq \lambda_{PF} \leq d_{max}$~\cite{godsil2001strongly}. Therefore, a loose lower bound would be
	$1-d/n-\log_q d_{max}$. Moreover, note that $ d_{max}=2 $ for the state diagram of any NSE channel. Thus for large  alphabet size $q\to\infty$, the lower bound meets the upper bound obtained in \eqref{nsec0f}, i.e. $C_0\to 1-d/n$.
	
	For the example channel of Fig. \ref{fig:transition} with binary input ( $q=2$), the lower bound can be calculated to be 0.1152.

	For NSS channels, we have the following bound:
	\begin{thm}[NSS bound via topological entropy]\label{prop:nssl}
		The zero-error capacity of an $(n,d)$ NSS channel with topological entropy $h_{ch}$ bounded by
		\begin{align}
		C_0 \geq 1-2h_{ch} 
		\label{nsslmd}
		\end{align}
	\end{thm}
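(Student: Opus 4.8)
The plan is to argue as in the proof of Theorem~\ref{prop:nsel}, but using the additive description $Y_k=X_k+Z_k\bmod q$ from Definition~\ref{def:nsschannel}, which turns confusability into a statement about a \emph{difference set} in the group $\mathbb{Z}_q^N$. Fix a block length $N=t+1$ and let $\mathcal{E}_N\subseteq\mathbb{Z}_q^N$ be the set of noise words of length $N$ admissible for the $(n,d)$ NSS channel (i.e.\ those generated by some initial state); this set does not depend on the channel input, and by the one-to-one correspondence between state trajectories, noise words, and — for a fixed input — output words noted after Definition~\ref{def:nsschannel}, we have $|\mathcal{E}_N|\le\sum_{s_0\in\mathcal{S}}|\mathscr{Y}(s_0,x^N)|$ for any fixed $x^N$. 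First I would record the standard equivalence: a block code $\mathcal{C}\subseteq\mathbb{Z}_q^N$ decodes with zero error for every initial state and every admissible noise sequence if and only if $(\mathcal{C}-\mathcal{C})\cap(\mathcal{E}_N-\mathcal{E}_N)=\{0\}$. Indeed, under this condition a received word $y$ lies in $c+\mathcal{E}_N$ for exactly one $c\in\mathcal{C}$, namely the transmitted codeword, so the decoder $y\mapsto$ (that unique $c$) never errs; conversely, if $c-c'=z'-z$ with $c\neq c'$ and $z,z'\in\mathcal{E}_N$, then $c+z=c'+z'$ is an output reachable from both $c$ and $c'$.

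Second I would bound $|\mathcal{E}_N|$. For $d<n$ (the claimed bound being $\le 0$, hence vacuous, otherwise) the NSS state-transition graph is strongly connected — from any state the all-clear state is reached by appending $n$ error-free symbols, and from the all-clear state every state is reached within $n$ steps — so $\mathcal{A}$ is irreducible and Proposition~\ref{prop:out} yields a constant $\beta>0$ with $|\mathscr{Y}(s_0,x^N)|=z_0\mathcal{A}^N\mathbbm{1}\le\beta\lambda_{PF}^N$ for every $s_0$ and $x^N$. Hence $|\mathcal{E}_N|\le|\mathcal{S}|\,\beta\,\lambda_{PF}^N=|\mathcal{S}|\,\beta\,q^{Nh_{ch}}$, since $h_{ch}=\log_q\lambda_{PF}$, and therefore $|\mathcal{E}_N-\mathcal{E}_N|\le|\mathcal{E}_N|^2\le(|\mathcal{S}|\beta)^2\,q^{2Nh_{ch}}$.

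Third, a greedy (Gilbert--Varshamov type) construction finishes the job: working in the Cayley graph on $\mathbb{Z}_q^N$ with connection set $(\mathcal{E}_N-\mathcal{E}_N)\setminus\{0\}$, pick codewords one at a time, each chosen codeword $c$ ruling out at most $|\mathcal{E}_N-\mathcal{E}_N|$ further points $c'$ (those with $c-c'\in\mathcal{E}_N-\mathcal{E}_N$), so the procedure cannot halt before $|\mathcal{C}|\ge q^N/(|\mathcal{E}_N-\mathcal{E}_N|+1)\ge(|\mathcal{S}|\beta)^{-2}\,q^{N(1-2h_{ch})}$. The resulting $\mathcal{C}$ satisfies the equivalence of the first paragraph, so by \eqref{c0def},
\begin{align}
C_0\ \ge\ \frac{\log_q|\mathcal{C}|}{t+1}\ \ge\ 1-2h_{ch}-\frac{2\log_q(|\mathcal{S}|\beta)}{N},\nonumber
\end{align}
and letting $N=t+1\to\infty$ gives $C_0\ge 1-2h_{ch}$.

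The delicate points all sit in the first two paragraphs: checking that block admissibility of the noise is genuinely input-independent and subsumed by $\bigcup_{s_0}\mathscr{Y}(s_0,\cdot)$, that the constant $\beta$ of Proposition~\ref{prop:out} is uniform over initial states and inputs, and that the crude estimate $|\mathcal{E}_N-\mathcal{E}_N|\le|\mathcal{E}_N|^2$ — the only place where the symmetric channel is treated differently from the erasure channel, and exactly what produces the factor $2$ in front of $h_{ch}$ — is all that must be conceded. Once these are in place the third step is the routine greedy count, whose sub-exponential corrections wash out in the rate limit.
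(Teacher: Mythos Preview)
Your proposal is correct and takes essentially the same approach as the paper: a Gilbert--Varshamov greedy argument in which the confusability neighbourhood of any codeword is bounded by the square of the number of admissible noise sequences, the factor $2$ in front of $h_{ch}$ arising precisely from your estimate $|\mathcal{E}_N-\mathcal{E}_N|\le|\mathcal{E}_N|^2$. The only cosmetic difference is that the paper phrases this squaring via a symmetry lemma ($y^N\in\mathscr{Y}(s_0,x^N)\Rightarrow x^N\in\mathscr{Y}(s_0,y^N)$, so $|\mathcal{G}(y^N)|=|\mathscr{Y}(s_0,y^N)|$) rather than the additive difference-set language, and reduces to the error-free initial state instead of summing over $\mathcal{S}$; both choices wash out in the rate limit.
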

	\begin{proof}
		See Appendix \ref{app:nsslw}.
	\end{proof}

	\section{State Estimation over Non-stochastic Channels}\label{sec:est}
	In this section, we first briefly provide some necessary aspects of the {\em uncertain variable (uv)} framework of \cite{nair2013nonstochastic}. Using this framework, a necessary and sufficient condition for linear state estimation with uniformly bounded estimation errors via channels with finite memory is derived, extending the memoryless channel analysis in \cite{nair2013nonstochastic}.  By combining this condition with the $C_0$ bounds in previous sections, separate necessary and sufficient conditions are obtained for linear state estimation via NES and NSS channels, involving the topological entropies of the linear system and the channel.
	
	\subsection{Uncertain channels with finite memory}
	First, some definitions from \cite{nair2013nonstochastic} are needed. Let $\Pi$ be a sample space. An {\em uncertain variable (uv)} $Z$ is a mapping from $ \Pi $ to a set  $\mathcal{Z}$. 
	Given other uv's $ W $ and $Z$,  the marginal, joint and conditional ranges are denoted 
	\begin{align}
	\llbracket Z\rrbracket:=& \{ Z(\pi) : \pi \in \Pi \} \subseteq \mathcal{Z}, \nonumber\\
	\llbracket Z,W\rrbracket:=& \{ (Z(\pi),W(\pi)) : \pi \in \Pi \} \subseteq \llbracket Z\rrbracket\times \llbracket W\rrbracket,\nonumber\\
	\llbracket W|z\rrbracket:=& \{ W(\pi) : Z(\pi)=z, \pi \in \Pi \}. \nonumber
	\end{align}
	The uv's $Z$ and $W$ are said to be {\em mutually unrelated} if $\llbracket Z,W\rrbracket = \llbracket Z\rrbracket \times \llbracket W\rrbracket$, i.e. if the joint range is the Cartesian product of the marginal ones.
	
	In what follows, assume that $ \mathcal{X} $, $ \mathcal{Y} $ and $ \mathcal{V} $ are the input, output, and noise spaces of the channel, respectively. Now, a channel with finite memory can defined as follows.
	
	\begin{defi} [Uncertain channel with finite memory]
		An uncertain channel with input sequence $X$ and output sequence $Y$ is said to have a finite memory if there exists an integer $m\geq 0$ such that
		\begin{align}
		\llbracket Y(t)|&x(0:t),y(0:t-1) \rrbracket=\nonumber\\
		&\llbracket Y(t)|x(t-m:t),y(t-m:t-1) \rrbracket,\, \forall t \geq m. \label{m1}
		\end{align}
		The smallest $m$ such that \eqref{m1} holds is called the {\em memory of the channel}.
	\end{defi}
	In other words, given channel inputs and past outputs dating back $m$ steps, the current output is {\em conditionally unrelated} with the inputs and outputs that are more than $m$ steps old.
	
	Note that $m=0$ corresponds to a memoryless channel (with the convention that $y(t:t-1)$ is the empty sequence).
	
	Further note that the $(n,d)$ NSE and NSS channels considered in this paper have memory $n$. This is because the sequences $x(t-n:t-1)$ and $y(t-n:t-1)$ determine the current channel state; this, when combined with the current input $x(t)$, fully determines the range of values that the current output may take. 
	
	\subsection{State estimation of LTI systems over uncertain channels}
	
	Consider a linear time-invariant (LTI) dynamical system 
	\begin{align}
	X(t+1)&=AX(t)+V(t) \in\mathbb{R}^n, \label{lti1}\\
	Y(t)&=CX(t)+W(t) \in\mathbb{R}^p,  \label{lti2}
	\end{align}
	where the uv's $ V(t)$  and $ W(t)$ represent process and measurement disturbances. Here, the goal is to keep the estimation error \emph{uniformly bounded}, i.e. $ \sup_{t\geq 0}\|\hat{X}(t)-X(t)\|$ bounded, with $ \hat{X}(t)$ denoting the state estimate based on the measurement sequence $Y(0:t)$.The following assumptions are considered:
	\begin{itemize}
		\item[A1:] The pair $ (C,A) $ is observable;
		\item[A2:] There exist uniform bounds on the initial condition $X(0)$ and the noises $ V(t)$, $W(t)$;
		\item[A3:] The initial state $X(0)$, the noise signals $V$, $W$, and the channel error patterns are {\em mutually unrelated};
		\item[A4:]  The zero-noise sequence pair $ (V,W)=(0,0) $ is valid;
		\item[A5:] $A$ has one or more eigenvalues $\lambda_i$ with magnitude greater than one.
	\end{itemize}
	The {\em topological entropy} of the system is given by 
	\[
	h_{lin}= \sum_{|\lambda_i|\geq 1} \log|\lambda_i|,
	\]
	and can be viewed as the rate at which it generates uncertainty. We have the following theorem.
	
	\begin{thm} \label{thm:finitem}
		Consider an LTI system \eqref{lti1}-\eqref{lti2} satisfying conditions A1--A5. Assume that outputs are coded and estimated via an uncertain channel with finite memory having zero-error capacity $ C_0>0$. 
		Then a coder-estimator yielding uniformly bounded estimation errors with respect to a nonempty ball $ \mathbf{B}_l(0) \subseteq \mathbb{R}^n$ of initial states exists if and only if 
		\begin{align}
		C_0 > h_{lin}. \label{estdis}
		\end{align}
	\end{thm}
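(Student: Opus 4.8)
\emph{Proof plan.} The idea is to follow the memoryless-channel template of \cite{nair2013nonstochastic}, the new ingredient being a block-coding / block-observer construction that absorbs the finite channel memory $m$ into boundary effects that vanish as the block length grows. Throughout I would put $A$ into real Jordan form and split $\mathbb{R}^n=\mathcal{U}\oplus\mathcal{W}$ into the $A$-invariant unstable subspace $\mathcal{U}$ (the generalized eigenspaces of the $\lambda_i$ with $|\lambda_i|\ge 1$; nonempty by A5) and the stable subspace $\mathcal{W}$, using the two facts that the restriction of $A^t$ to $\mathcal{U}$ has determinant of modulus $q^{t h_{lin}}$, while the restriction to $\mathcal{W}$ has norm tending to $0$.

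\emph{Sufficiency.} Since $C_0>h_{lin}$, choose $R$ with $h_{lin}<R<C_0$; by the (super-additive) definition of $C_0$ in \eqref{c0def} there is a finite block length $\tau$ and a length-$\tau$ zero-error code of rate at least $R$, and because that definition requires zero error for \emph{every} initial channel state, such blocks can be concatenated freely despite the channel memory. I would then run a coder--estimator in $\tau$-blocks: at the start of block $k$ the estimator holds an outer bound $\mathcal{X}_k$ on the current state; the coder quantizes the projection of $\mathcal{X}_k$ onto $\mathcal{U}$ into $q^{R\tau}$ cells, transmits the index of the true cell losslessly across the block, and both ends propagate the chosen cell through $A$ and inflate it by the (bounded, by A2) process/measurement noise. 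Each block divides the $\mathcal{U}$-volume by $q^{R\tau}$ while the dynamics multiply it by $q^{h_{lin}\tau}$, so the net per-block factor $q^{-(R-h_{lin})\tau}<1$ together with a bounded additive input gives a geometrically stable recursion; hence $\mathcal{X}_k$ stays uniformly bounded, observability (A1) lets the estimator reconstruct the full state with bounded error, the stable part $\mathcal{W}$ costs no rate (a dead-beat term), and A3 is what forces the uncertainty sets to decompose as products so that this recursion is valid.

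\emph{Necessity.} Suppose a coder--estimator keeps $\sup_t\|\hat X(t)-X(t)\|\le L<\infty$ for all initial states in $\mathbf{B}_l(0)$. Using the zero system-noise pair (valid by A4), the output sequence $Y(0:t)$ — hence the entire channel-input sequence $\xi(0:t)$ produced by the coder, which sees only system outputs — is a deterministic function of $X(0)$. If $x_0,x_0'\in\mathbf{B}_l(0)$ satisfy $\|A^t(x_0-x_0')\|>2L$, then $\xi(0:t)$ evaluated at $x_0$ and at $x_0'$ must be non-confusable, since a common received sequence would force $\|A^t x_0-A^t x_0'\|\le 2L$. Taking a maximal $2L$-separated subset of $A^t(\mathbf{B}_l(0)\cap\mathcal{U})$ and pulling it back through the restriction of $A^t$ to $\mathcal{U}$ yields $\gtrsim q^{t h_{lin}}$ pairwise non-confusable input sequences of length $t+1$, i.e.\ a zero-error code, whence $C_0\ge\limsup_t\frac{1}{t+1}\log_q(\text{code size})\ge h_{lin}$.

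\emph{The strict inequality, and the main obstacle.} The packing argument alone delivers only ``$\ge$''; ruling out $C_0=h_{lin}$ is the delicate step. Here I would reinstate adversarial bounded process noise (the admissible noise set being bounded by A2 and, for a nondegenerate problem, full-dimensional near the origin allowed by A4) and argue that when the channel can shrink the $\mathcal{U}$-volume of the uncertainty set by at most the factor $q^{(\text{block length})C_0}=q^{(\text{block length})h_{lin}}$ per block, this exactly cancels the dynamical expansion and leaves no slack to absorb the noise ball, so the $\mathcal{U}$-volume grows without bound and the error cannot stay $\le L$ — a contradiction, hence $C_0>h_{lin}$. Making this accumulation rigorous is the crux: one must establish a \emph{conditional} zero-error-coding bound (conditioned on the estimator's history, only $q^{(\text{block length})h_{lin}+o(\cdot)}$ genuinely distinguishable block-continuations are available) and verify that the finite memory $m$, being fixed while the block length grows, contributes only an $o(1)$ rate overhead — which is precisely where \eqref{m1} enters, together with the universality over channel states built into \eqref{c0def} that already legitimized block concatenation in the sufficiency part.
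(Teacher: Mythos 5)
Your proposal is correct in substance and reaches the same two conclusions as the paper, but the necessity direction takes a genuinely different route. For sufficiency you and the paper do essentially the same thing: pick a rate $R\in(h_{lin},C_0)$, take a length-$\tau$ zero-error code of that rate, exploit the fact that \eqref{c0def} demands zero error from \emph{every} initial channel state so that blocks concatenate (super-additivity), down-sample the plant by $\tau$, and invoke the errorless data-rate theorem / quantized-observer recursion on the unstable subspace. For necessity, the paper routes the argument through \emph{maximin information}: it first proves an auxiliary proposition that $C_0=\lim_t \sup \frac{1}{t+1}I_*[X(0:t);Y(0:t)]$ for finite-memory channels (using an $m$-symbol guard interval so the memory costs only $m/\tau\to 0$), then builds a hypercuboid family $\mathscr{H}$ of $\prod_i\lfloor|(1-\epsilon)\lambda_i|^\tau\rfloor$ separated initial conditions, shows it is overlap-isolated in the conditional ranges $\llbracket X(0)\mid q(0:\tau-1)\rrbracket$, and concludes $I_*\geq \tau(h_{lin}-o(1))$ while $I_*\leq \tau C_0+m$. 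You instead bypass $I_*$ entirely: with zero noise (A4) the channel input sequence is a deterministic function of $X(0)$, bounded estimation error forces initial states with $\|A^t(x_0-x_0')\|>2L$ to generate non-confusable input sequences, and a volume/packing count of a $2L$-separated set in $A^t(\mathbf{B}_l(0)\cap\mathcal{U})$ directly produces a zero-error code of size $\gtrsim q^{t h_{lin}}$. The two constructions are the same packing in disguise, but yours is more elementary and self-contained (no taxicab/overlap partition machinery), whereas the paper's detour through $I_*$ yields the finite-memory channel coding proposition as a reusable by-product. One caution for your version: the code you extract must be zero-error for \emph{every} initial channel state to qualify under \eqref{c0def}, so you should require the coder--estimator to be uniformly bounded over all initial channel states (which is the intended reading).

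On the strict inequality: you are right that the packing argument alone yields only $C_0\geq h_{lin}$, and your honesty here is warranted — but note that the paper's own proof has exactly the same limitation (its final display is a strict inequality for each finite $\tau$ and $\epsilon$, which degrades to $C_0\geq h_{lin}$ after letting $\tau\to\infty$ and $\epsilon\to 0$). Your sketched volume-accumulation argument for excluding equality is not carried out in the paper either, so you should not regard its absence as a deviation from the reference proof; it is a gap shared with (and inherited from) the memoryless analysis of \cite{nair2013nonstochastic}.
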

	\begin{proof}
		See Appendix \ref{sec:appd2}.
	\end{proof}
	\textbf{Remarks:} Theorem \ref{thm:finitem} extends the results of \cite{nair2013nonstochastic} for memoryless channels to channels with finite memory.
	It states that uniformly reliable estimation is possible if and only if the zero-error capacity of the channel exceeds the rate at which the system generates uncertainty.
	\subsection{State estimation over non-stochastic channels}
	In the sequel, we explore the consequences of previous results in Theorems \ref{thm:nsec0f}--\ref{thm:finitem}.
	\begin{thm} [Bounded estimation errors via NSE channel]\label{thm:nseest} 
		Consider an LTI system in \eqref{lti1}-\eqref{lti2} satisfying conditions A1--A5. Assume that the measurements are coded and transmitted via an $(n,d)$ NSE channel
		with topological entropy $h_{ch}$. Then uniformly bounded estimation errors can be achieved if
		\begin{align}
		h_{lin} + h_{ch}<1-\frac{d}{n} \label{stabi3}.
		\end{align}
		Conversely, there exist sequences of process and measurement noise for which the estimation error grows unbounded if
		\begin{align}
		h_{lin}>1-\frac{d}{n}. 
		\end{align}
	\end{thm}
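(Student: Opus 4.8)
The plan is to derive Theorem~\ref{thm:nseest} as a corollary of the zero-error-capacity estimates already proved for NSE channels together with the general estimation criterion of Theorem~\ref{thm:finitem}, treating the two implications separately. Recall from the remarks after Definition~\ref{def:nsechannel} that an $(n,d)$ NSE channel has finite memory (equal to $n$), so Theorem~\ref{thm:finitem} will be applicable as soon as $C_0>0$ is verified.

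\textbf{Sufficiency.} Assume the hypothesis $h_{lin}+h_{ch}<1-d/n$ of \eqref{stabi3}. Rearranging gives $h_{lin}<1-d/n-h_{ch}$, and Theorem~\ref{prop:nsel} bounds the right-hand side from below by $C_0$, so $C_0\ge 1-d/n-h_{ch}>h_{lin}$. Since $h_{lin}=\sum_{|\lambda_i|\ge1}\log|\lambda_i|\ge0$, this chain in particular gives $C_0>0$. Hence the hypotheses of Theorem~\ref{thm:finitem} all hold, and it supplies a coder--estimator achieving uniformly bounded estimation errors on a nonempty ball $\mathbf{B}_l(0)$ of initial states.

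\textbf{Converse.} Suppose instead that $h_{lin}>1-d/n$. Using $C_0\le C_{0f}$ and Theorem~\ref{thm:nsec0f} (which gives $C_{0f}=1-d/n$), we get $C_0\le 1-d/n<h_{lin}$. If $d<n$, then $C_0>0$ by the repetition-code argument recorded after Definition~\ref{def:nsechannel}, so Theorem~\ref{thm:finitem} applies; as the inequality $C_0>h_{lin}$ fails, no coder--estimator keeps $\sup_{t\ge0}\|\hat X(t)-X(t)\|$ bounded, which is precisely the asserted existence of process/measurement noise sequences (with an admissible initial state) for which the error diverges. The only other possibility is $d\ge n$, i.e. $d=n$, where an entire window may be erased, the channel output range is independent of the input, and hence $\hat X(t)$ is the same for every admissible realization; then by A5 two distinct initial states in $\mathbf{B}_l(0)$ that differ along an unstable mode of $A$ have trajectories separating without bound under the zero-noise sequence admitted by A4, so the error is again unbounded.

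The bulk of the work lies in Theorems~\ref{thm:nsec0f}, \ref{prop:nsel} and \ref{thm:finitem}; what remains here is careful bookkeeping with the inequalities, preserving strictness where needed. The one genuine subtlety is that Theorem~\ref{thm:finitem} carries the standing hypothesis $C_0>0$: this is automatic in the sufficiency part but, in the converse, the borderline case $d=n$ (the unique NSE instance with $C_0=0$) must be handled by the direct indistinguishability argument above rather than by invoking Theorem~\ref{thm:finitem}. I expect that edge case to be the only place requiring any thought.
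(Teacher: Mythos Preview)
Your proposal is correct and follows essentially the same route as the paper, which simply states that the result follows from Theorems~\ref{thm:nsec0f}, \ref{prop:nsel}, and \ref{thm:finitem}. You are in fact more careful than the paper in verifying the standing hypothesis $C_0>0$ of Theorem~\ref{thm:finitem} and in separately dispatching the degenerate case $d=n$; the paper's one-line proof glosses over these points.
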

	
	\tikzstyle{block1} = [draw, fill=celadon, rectangle, 
	minimum height=3em, minimum width=5em]
	\tikzstyle{block2} = [draw, fill=cblue, rectangle, 
	minimum height=3em, minimum width=5em]
	\tikzstyle{input} = [coordinate]
	\tikzstyle{output} = [coordinate]
	\tikzstyle{pinstyle} = [pin edge={to-,thin,black}]
	\begin{figure}[t]
		\centering
		\scalebox{.75}{\begin{tikzpicture}[auto, node distance=2cm,>=latex']
			\node [input, name=input ] {};
			\node [block2, right of=input, pin={[pinstyle]above: Disturbance }] (plant) { Plant };
			\node [block1, right of=plant, node distance=3cm] (encoder) { Encoder };
			\node [block2, right of=encoder, node distance=3cm] (channel) {Channel };
			\node [block1, below of=channel, node distance=1.8cm] (decoder) { Decoder/Estimator} ;
			\node [output, right of =decoder] (output){};
			
			\draw [draw,->] (input) -- node [name=w]{} (plant);
			\draw [draw,->] (plant) -- node {} (encoder);
			\draw [->] (encoder) -- node {} (channel);
			\draw [->] (channel) -- node {} (decoder);
			\draw [->] (decoder) -- node  {}(output);
			\draw [->] (w) |- (decoder);
			\end{tikzpicture}}
		\caption{State estimation via a communication channel.}
		\label{est}
	\end{figure}
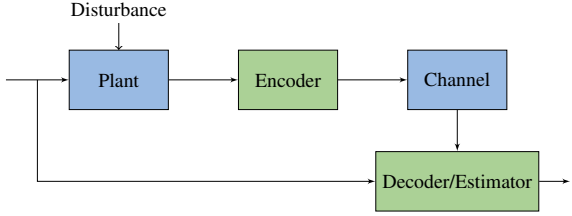
	\begin{proof} Follows from Theorems \ref{thm:nsec0f},  \ref{prop:nsel}, and \ref{thm:finitem}.  \end{proof}
	
	\textbf{Remarks:} The achievability part of this theorem involves the topological entropies of both the linear system and the channel. If their sum, which can be regarded as a total rate of uncertainty generation, is less than the worst-case rate at which symbols can be transported errorlessly across the channel, then uniformly bounded estimation errors are possible.  
	
	For non-stochastic symmetric channels, the conditions are as follows:
	
	\begin{thm} [Bounded estimation errors via NSS channel] \label{thm:nssest} Consider an LTI system in \eqref{lti1}-\eqref{lti2} satisfying conditions A1--A5. Assume that outputs are coded and estimated via a $ (n,d) $ NSS channel with topological entropy $h_{ch}$. Then, uniformly bounded estimation errors can be achieved if
		\begin{align}
		h_{lin} + 2h_{ch} <1\label{2est1}.
		\end{align}
		Conversely, there exists a sequence of process and measurement noises for which the estimation error grows unbounded if
		\begin{align}
		h_{lin}>1-\frac{d}{n}\log_q (q-1) \label{2est2}.
		\end{align}
	\end{thm}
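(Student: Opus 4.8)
The plan is to derive Theorem~\ref{thm:nssest} by assembling the zero-error capacity bounds for NSS channels with the general estimation characterization of Theorem~\ref{thm:finitem}, mirroring exactly the one-line argument used for the NSE case in Theorem~\ref{thm:nseest}. Two standing facts make this possible: an $(n,d)$ NSS channel has finite memory (equal to $n$), so Theorem~\ref{thm:finitem} applies to it; and assumption A5 guarantees $h_{lin}>0$. The achievability condition \eqref{2est1} will come from the topological-entropy \emph{lower} bound on $C_0$ (Theorem~\ref{prop:nssl}), while the converse \eqref{2est2} will come from the \emph{upper} bound on $C_{0f}$ (Theorem~\ref{thm:nssc0f}) together with $C_0\leq C_{0f}$.

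For achievability, suppose $h_{lin}+2h_{ch}<1$. Then $1-2h_{ch}>h_{lin}$, and since $h_{lin}>0$ we have $1-2h_{ch}>h_{lin}>0$. By Theorem~\ref{prop:nssl}, $C_0\geq 1-2h_{ch}$, so $C_0>h_{lin}$ and $C_0>0$. Since the NSS channel has finite memory and positive zero-error capacity, the "if" direction of Theorem~\ref{thm:finitem} yields a coder-estimator that keeps $\sup_{t\geq0}\|\hat X(t)-X(t)\|$ bounded over a nonempty ball $\mathbf{B}_l(0)$ of initial states, as claimed.

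For the converse, I would first establish $C_0\leq 1-\tfrac{d}{n}\log_q(q-1)$ in all cases: using $C_0\leq C_{0f}$ and Theorem~\ref{thm:nssc0f}, this is immediate when $d<n/2$, and when $d\geq n/2$ it holds because $C_{0f}=0\leq 1-\tfrac{d}{n}\log_q(q-1)$ (as $\tfrac{d}{n}\leq 1$ and $\log_q(q-1)<1$). Now if $h_{lin}>1-\tfrac{d}{n}\log_q(q-1)\geq C_0$, the "only if" direction of Theorem~\ref{thm:finitem} shows that no coder-estimator can keep the estimation error uniformly bounded, which by negating the boundedness conclusion means that for every coder-estimator there exist process and measurement noise sequences (and an initial condition in the ball) for which $\sup_{t\geq0}\|\hat X(t)-X(t)\|=\infty$.

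Since this is essentially a recombination of results already proved, I do not expect a substantial obstacle. The one point requiring care is the boundary case $d\geq n/2$ in the converse, where $C_0=0$ lies outside the hypothesis $C_0>0$ of Theorem~\ref{thm:finitem}; here one argues directly that with $C_0=0$ no positive-rate code transmits reliably, so the unstable modes of $A$ guaranteed by A5 cannot be tracked and the estimation error is unbounded — the same reasoning as the converse of Theorem~\ref{thm:finitem} specialized to $C_0=0$. A secondary bookkeeping point is simply aligning the "there exists a noise sequence for which the error grows unbounded" phrasing of the statement with the iff formulation of Theorem~\ref{thm:finitem}, which follows at once by contraposition.
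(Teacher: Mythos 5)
Your proposal is correct and follows exactly the paper's own (one-line) argument: achievability from the lower bound $C_0\geq 1-2h_{ch}$ of Theorem~\ref{prop:nssl} combined with Theorem~\ref{thm:finitem}, and the converse from $C_0\leq C_{0f}\leq 1-\frac{d}{n}\log_q(q-1)$ of Theorem~\ref{thm:nssc0f} combined with the necessity direction of Theorem~\ref{thm:finitem}. Your explicit handling of the boundary case $d\geq n/2$ and of the $h_{lin}>0$ consequence of A5 is a welcome refinement of details the paper leaves implicit.
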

	\begin{proof} The proof follows from Theorems \ref{thm:nssc0f}, \ref{prop:nssl}, and \ref{thm:finitem}.  \end{proof}
	%
	\section{Conclusion} \label{sec:conclusions}
	State estimation of linear time-invariant discrete-time systems over non-stochastic channels was considered. Due to the sliding nature of the channels, they had memory. Bounds for the zero-error capacity of the channels were derived using results from feedback capacity and topological entropy theory. These bounds were translated to uniformly bounded state estimation over channels with finite memory by extending the results in networked estimation theory. Future work will focus on tightening these bounds and on  the uniform stability of linear control systems via non-stochastic channels with memory. 
	
	\appendices
	
	\section{Proof of Theorem \ref{thm:nsec0f}~($ C_{0f} $ of NSE channel)} \label{sec:appd}  
	The set of states, $ \mathcal{S} $ of NSE channel can be partitioned into two subsets $ \mathcal{S}_I $ and $ \mathcal{S}_{II} $. Where, $ \mathcal{S}_I $ is the set of states that the number of erasures in past sliding-window has not reached its maximum ($ d $) and next action can take them in two states, one erasure-free transmission and the other one, with erasure. Whereas, $ \mathcal{S}_{II} $ is the set of states that there is only one possible action which corresponds to states that the number of erasure in past $n$ transmissions has reached $d$ and no erasure moves out of the window; e.g. in Fig. \ref{fig:transition}, $ \mathcal{S}_I=\{s_1,s_4\}  $ and $ \mathcal{S}_{II}=\{s_2,s_3\} $.
	Recalling that for NSE channel, $ \mathcal{G}(y,s'|s) =\{y\} $ if no erasure occurs and $\mathcal{G}(y,s'|s)= \mathcal{X} $ otherwise, the solution to first iteration of the DP problem \eqref{wit} for states in $ \mathcal{S}_I $ is
	\begin{align}
	\begin{split}
	W(1,s)&=\max_{P_{X|S}} \min\bigg\{\bigg(\sum_{x \in \mathcal{X}}P_{X|S}(x|s)\bigg)^{-1},\\
	&\qquad\qquad\qquad\qquad \bigg(\max_{y \in \mathcal{Y}} P_{X|S}(x=y|s)\bigg)^{-1}\bigg\}\end{split}\nonumber\\
	&= \min\bigg\{1,\max_{P_{X|S}}\bigg(\max_{y \in \mathcal{Y}}P_{X|S}(x=y|s)\bigg)^{-1}\bigg\}\label{its0}\\
	&= \min\{1,q\}=1,\; \forall s \in \mathcal{S}_I. \label{its1}
	\end{align}	
	where in \eqref{its0} since the summation $\sum_{x \in \mathcal{X}}P_{X|S}(x|s)$ is on all input space then it equals $1$. Furthermore, because of symmetry, uniform distribution is the solution of
	$\max_{P_{X|S}}\big(\max_{y \in \mathcal{Y}}P_{X|S}(x=y|s)\big)^{-1}$ 
	which equals $q$ and gives \eqref{its1}. 
	
	In other words, \eqref{its1} shows that when there are two possible edges out going from the initial state, the DP problem chooses the edge corresponding to erasure. 
	
	For states in $ \mathcal{S}_{II} $, there is only one edge out going which means erasure can not happen. Henceforth, there is only one state to go, thus
	\begin{align}
	W(1,s)&=\max_{P_{X|S}} \min\bigg\{\bigg(\max_{y \in \mathcal{Y}}P_{X|S}(x=y|s)\bigg)^{-1}\bigg\}\nonumber\\
	&= \max_{P_{X|S}}\bigg(\max_{y \in \mathcal{Y}}P_{X|S}(x=y|s)\bigg)^{-1}\nonumber\\
	&=q,\; \forall s \in \mathcal{S}_{II}.\label{its2}
	\end{align}
	\begin{figure}[t]
		\centering
		\begin{tikzpicture}[->, >=stealth', auto, semithick, node distance=1.7cm]
		\tikzstyle{every state}=[fill=white,draw=black,thick,text=black,scale=1]
		\node[state]   (S1)       {$  s_1 $};
		\node[state]    (S2)[left of=S1]   {$ s_2 $};
		\node[state]    (S3)[below of=S2]   {$ s_3 $};
		\node[state]    (S4)[right of=S3]   {$ s_4 $};
		\path
		(S1) edge[out=30,in=70,looseness=8] node [below right] {$ q $} (S1)
		edge[red,bend right]  node [above] {$ 1 $} (S2)
		(S2) edge[bend right]  node [left] {$ q $}  (S3)
		(S2.280) edge[blue, -,dashed, bend right, line width=.5mm]   (S3.90)
		(S3) edge[bend right]  node [below] {$ q $} (S4)
		(S3.0) edge[blue, -,dashed, bend right, line width=.5mm]   (S4.180)
		(S4) edge[bend right]  node [right] {$ q $} (S1)
		(S4) edge[red]  node [above] {$ 1 $} (S2)
		(S4.160) edge[blue, -,dashed, line width=.5mm]   (S2.290);
		\end{tikzpicture}
		\caption{The gain $a(s,s')$ associated with each edge for (3,1) NSE channel which is $q$ for error-free transmission and $1$ for erasure. The loop shaped with the states obtained from solution of DP problem for $ s \in \mathcal{S}_m $ is highlighted with blue dashed line.}
		\label{fig:transition2}
	\end{figure}
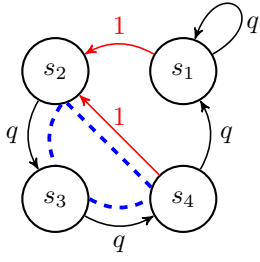
	Consequently, \eqref{its1} and \eqref{its2} show that starting from a state, DP iteration will choose erasure edge (if there is one) and the gain (or reward) $ W(1,s) =1$ and if there is no erasure edge it will go the only way possible and the gain is $ W(1,s) =q $. Observe that, the value of gains for next iterations either is $ 1 $ or $ q $. In other words, if current state $ s\in \mathcal{S}_I $, two states are reachable. Denote $ s_e $ for state that is the end-point of erasure edge and $ s_s $ for state that is end-point of error-free edge, hence
	\begin{align}
	\begin{split}
	W(k,s)&=\max_{P_{X|S}} \min\bigg\{W(k-1,s_e)\bigg(\max_{y \in \mathcal{Y}} \sum_{x \in \mathcal{X}} P_{X|S}(x|s)\bigg)^{-1}\\
	&\qquad\qquad, W(k-1,s_s)\bigg(\max_{y \in \mathcal{Y}} P_{X|S}(x=y|s)\bigg)^{-1}\bigg\}
	\end{split}\nonumber\\
	\begin{split}
	&= \min\bigg\{ W(k-1,s_e) \times 1, \\
	&\qquad \qquad \qquad W(k-1,s_s)\times q \bigg\}, \forall s \in \mathcal{S}_{I}. \label{its3}
	\end{split}
	\end{align}	
	Note that \eqref{its3} shows the edge with erasure, multiplies a gain of 1. Whereas, the edge with error-free transmission multiplies a gain of $ q $. Furthermore, if current state $ s\in \mathcal{S}_{II} $, it leads to	
	\begin{align}
	W(k,s)&= \max_{P_{X|S}} \Bigg\{W(k-1,s') \bigg(\max_{y \in \mathcal{Y}} P_{X|S}(x=y|s)\bigg)^{-1} \Bigg\}\nonumber\\
	&= W(k-1,s') \times q,\; \forall s \in \mathcal{S}_{II}. \label{its4}
	\end{align}
	
	From \eqref{its3} and \eqref{its4} it can be concluded that at each iteration a gain is multiplied to the cost-to-go, $ W(k-1,s') $. We denote this gain with $ a(s,s') $. This gain is obtained by solving $ \max_{P_{X|S}} \bigg(\max_{y \in \mathcal{Y}} \sum_{x \in \mathcal{G}(y,s'|s)} P_{X|S}(x|s)\bigg)^{-1} $ for given $ s $ and $ s' $ that results in
	\begin{align}
	a(s,s')=
	\begin{cases}
	1 & \text{if } \mathcal{G}(y,s'|s)=\mathcal{X},
	\\
	q & \text{if } \mathcal{G}(y,s'|s)=y.
	\end{cases} \label{gain}
	\end{align}
	In other words, if going from $ s $ to $ s' $ is a path or edge with erasure (or when $ \mathcal{G}(y,s'|s)= \mathcal{X}$) the gain is $a(s,s')=1 $ and if it is for error-free edge (or when $ \mathcal{G}(y,s'|s)= y$) then $a(s,s')=q $.
	In Fig. \ref{fig:transition2} the associated gain for each edge is shown. The red lines that represents erasure have a gain of $ 1 $ and other edges which represent error-free transmission have gain of $ q $.
	
	Therefore, starting from any initial state, solving the DP problem of \eqref{wit} for NSE channel, corresponds to the state trajectory with maximum number of erasures that gives minimum overall gain. This sequence for the sample channel of Fig. \ref{fig:pos} is the bottom trajectory. Now we give the following Lemma.
	
	\begin{lem}
		In the iterative solution of \eqref{wit} for NSE channel, there is a set of states $\mathcal{S}_m \subset \mathcal{S}$ such that 
		\begin{align}
		W(k,s)  = q^{n-d}W(k-n,s), \; \forall s \in \mathcal{S}_m\label{sm},
		\end{align}
	\end{lem}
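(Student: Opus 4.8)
The plan is to build directly on the structure already extracted in this appendix: for an NSE channel the DP recursion~\eqref{wit} reduces to following the greedy ``erase-whenever-admissible'' policy, along which each step carries a multiplicative gain $a(s,s')=1$ on an erasure edge and $a(s,s')=q$ on an error-free edge (cf.~\eqref{gain},~\eqref{its3},~\eqref{its4}). Writing $s_0=s,s_1,s_2,\dots$ for the deterministic greedy state trajectory out of $s$ and $e_i\in\{0,1\}$ for the indicator that step $i$ is an erasure, unrolling the recursion gives
\[
W(k,s)=\prod_{i=1}^{k}a(s_{i-1},s_i)=q^{\,k-\sum_{i=1}^{k}e_i}.
\]
Since the greedy map $g:\mathcal{S}\to\mathcal{S}$ is a self-map of the finite set $\mathcal{S}$, it has at least one cycle; I take $\mathcal{S}_m$ to be the (nonempty) union of the states lying on these cycles, so that for $s\in\mathcal{S}_m$ the trajectory $(s_i)$ is periodic from the start (this is the loop highlighted in Fig.~\ref{fig:transition2}).

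The key step I would establish is an invariant of the greedy erasure pattern. Let $f(t):=\sum_{i=t-n+1}^{t}e_i$ be the number of erasures in the length-$n$ window ending at step $t$ along the greedy trajectory; by the window constraint greedy erases at step $t$ precisely when $f(t-1)-e_{t-n}\le d-1$. A short case analysis then shows that $f$ is non-decreasing: if $e_{t-n}=1$ the admissibility rule holds automatically and $f(t)=f(t-1)$; if $e_{t-n}=0$ then either $f(t-1)<d$, whence greedy erases and $f(t)=f(t-1)+1$, or $f(t-1)=d$, whence greedy does not erase and $f(t)=f(t-1)$. Trivially $f(t)\le d$.

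Now fix $s\in\mathcal{S}_m$: along its periodic greedy trajectory, $f$ is both periodic and non-decreasing, hence constant, say $f\equiv c\le d$. If $c<d$, then the same case analysis forces $e_{t-n}=1$ at \emph{every} step (the only other way to have $f(t)=f(t-1)$ requires $f(t-1)=d$), i.e. $e_i\equiv1$, whence $f\equiv n$; since $d<n$ this is a contradiction, so $c=d$. Consequently $\sum_{i=k-n+1}^{k}e_i=f(k)=d$ for every $k\ge n$, so among the steps $k-n+1,\dots,k$ there are exactly $d$ erasures (each of gain $1$) and $n-d$ error-free transmissions (each of gain $q$). Splitting the product accordingly,
\[
W(k,s)=\prod_{i=1}^{k}a(s_{i-1},s_i)=\Big(\prod_{i=1}^{k-n}a(s_{i-1},s_i)\Big)\Big(\prod_{i=k-n+1}^{k}a(s_{i-1},s_i)\Big)=W(k-n,s)\,q^{\,n-d},
\]
which is~\eqref{sm}.

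The main obstacle is the monotonicity-and-saturation argument for $f$ in the second and third paragraphs: everything else is bookkeeping, but that claim is precisely what pins down the period-$n$ (equivalently, $d$-erasures-per-window) structure of the worst-case maximum-erasure trajectory. A minor point to dispose of separately is the degenerate regime $d\ge n$, which is uninteresting since the channel can then always erase and $C_{0f}=0$.
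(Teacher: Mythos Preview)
Your argument is correct and rests on the same premise the paper uses, namely that the DP recursion~\eqref{wit} for the NSE channel is realised by the deterministic ``erase-whenever-admissible'' trajectory (the paper justifies this only informally, via~\eqref{its3}--\eqref{gain}, and you cite the same equations). Given that, your unrolling $W(k,s)=q^{\,k-\sum_{i\le k}e_i}$ and the window-count argument are sound; the monotonicity case analysis for $f(t)=\sum_{i=t-n+1}^{t}e_i$ is clean, and together with periodicity it forces $f\equiv d$ on any cycle of the greedy map, which is exactly what is needed.

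The route, however, differs from the paper's. The paper \emph{defines} $\mathcal{S}_m$ concretely as the set of states whose window already contains $d$ erasures, and then verifies directly that from any such state greedy yields $e_t=e_{t-n}$: the bit rolling out is replaced by an identical bit, so the state sequence is $n$-periodic and the $n$-step gain is $q^{n-d}$. You instead define $\mathcal{S}_m$ abstractly as the union of the cycles of the greedy self-map on $\mathcal{S}$ (nonempty by finiteness), and then \emph{prove} saturation via the monotone--periodic argument on $f$. The two definitions in fact coincide, though you do not show this. Your approach makes the ``worst-case trajectory saturates at $d$ erasures per window'' step more rigorous than the paper's informal ``because of lower gain'' justification; the paper's approach, on the other hand, identifies $\mathcal{S}_m$ explicitly, which is convenient for the subsequent Lemmas~\ref{lem:sm} and~\ref{lem:s1} that track the DP from specific initial states. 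Both arguments leave the global optimality of greedy (i.e.\ that the $\min$ in~\eqref{its3} is always attained on the erasure branch) at the same informal level.
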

	\begin{proof}
		Consider a set of states that they are associated with the situation that the number of erasures in past window of size $ n $ is reached its maximum, $ d $. The set of these state are denoted by $ \mathcal{S}_m $. Now by solving the DP problem with initial state in $ \mathcal{S}_m $, if in next action, no erasure is rolling out of window\footnote{This corresponds to situations that in the binary state representation last bit is "$*$"; e.g. "$*\circ\circ$".}, then there is only one edge directed out of the current state and have a gain of $ q $; e.g. states $ s_2 $ and $ s_3 $ in Fig. \ref{fig:transition2}. Moreover, if the current state is associated with the situation that one erasure is rolling out of the window then there are two reachable states, one corresponds to erasure (with gain $ 1 $) and another one for error-free transmission (with gain $ q $); e.g. states $ s_4 $ in Fig. \ref{fig:transition2}. Hence, because of lower gain, the edge with erasure determines the next state of DP. This procedure will continue for next iterations forming a loop that ends in starting state after $ n $ iteration. Since $ d $ number of states end up in situation with gain $ 1 $ the cumulative gain is $ q^{n-d} $.
	\end{proof}
	In other words, after $ n $ iterations, the DP problem comes back to the initial state with total gain of $ q^{n-d} $. 
	For example of Fig. \ref{fig:transition}, $ \mathcal{S}_m=\{s_2,s_3,s_4\} $ which Fig. \ref{fig:transition2} shows the loop of these states described above with a blue dashed line.		
	
	For calculating $ C_{0f} $ using \eqref{c0Fs}, we need to find the minimum $ W(\cdot,s) $ and to do so the following Lemmas are given.
	\begin{lem}\label{lem:sm} Starting from any state $ s $ in $ \mathcal{S}_m $, the following holds
		\begin{align} 
		\lim_{k \rightarrow \infty} \frac{1}{k} \log_q W(k,s)=1-\frac{d}{n}, \label{s_minf}
		\end{align}
	\end{lem}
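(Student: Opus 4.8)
The plan is to bootstrap directly from the recursion \eqref{sm} just established, which for $s\in\mathcal{S}_m$ and $k\geq n$ reads $W(k,s)=q^{n-d}W(k-n,s)$, with the \emph{same} state $s$ reappearing after each block of $n$ iterations. First I would unfold this recursion. Writing $k=mn+r$ with $m=\lfloor k/n\rfloor$ and $r\in\{0,1,\dots,n-1\}$, and applying \eqref{sm} exactly $m$ times — each intermediate index $k-jn$ for $0\leq j\leq m-1$ is at least $n$, so the recursion is legitimate at every step — I obtain
\[
W(k,s)=q^{m(n-d)}\,W(r,s).
\]
Here $W(r,s)$ is one of the finitely many numbers $\{W(r,s): 0\leq r\leq n-1,\ s\in\mathcal{S}_m\}$; each is strictly positive, since $W$ maps into $\mathbb{R}^+$ and, by \eqref{its3}--\eqref{its4}, every per-step gain is $\geq 1$, so in fact $W(r,s)\geq 1$. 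Consequently $\bigl|\log_q W(r,s)\bigr|$ is bounded by a constant $c$ independent of $k$ (and even if \eqref{sm} were only claimed to hold for $k$ beyond some threshold $k_0$, the same unfolding applies with $r$ ranging over the corresponding finite residue block, and the conclusion is unchanged).

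Next I would take logarithms and divide by $k$:
\[
\frac{1}{k}\log_q W(k,s)=\frac{m(n-d)}{k}+\frac{1}{k}\log_q W(r,s).
\]
Since $m=\lfloor k/n\rfloor$, the first term equals $\frac{n-d}{n}\cdot\frac{n\lfloor k/n\rfloor}{k}$, which tends to $\frac{n-d}{n}$ as $k\to\infty$ because $\frac{n\lfloor k/n\rfloor}{k}\to 1$; meanwhile $\bigl|\frac{1}{k}\log_q W(r,s)\bigr|\leq c/k\to 0$. Therefore
\[
\lim_{k\rightarrow\infty}\frac{1}{k}\log_q W(k,s)=\frac{n-d}{n}=1-\frac{d}{n},
\]
which is exactly \eqref{s_minf}.

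I do not expect a genuine obstacle here: the argument is essentially the observation that $\log_q W(k,s)$ grows with slope $(n-d)/n$ along an arithmetic progression of step $n$, plus a bounded (eventually periodic) remainder that is $o(k)$. The only point deserving a sentence of justification is precisely this uniform boundedness and positivity of the finitely many remainder values $W(r,s)$, which is what lets the non-periodic part be absorbed in the limit; everything else is index bookkeeping on top of the preceding lemma.
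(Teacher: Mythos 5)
Your proposal is correct and follows essentially the same route as the paper: unfold the recursion \eqref{sm} exactly $\lfloor k/n\rfloor$ times down to the residue $r=k-\lfloor k/n\rfloor n$, observe the remainder term contributes $o(k)$ to $\log_q W(k,s)$, and divide by $k$. The only cosmetic difference is that the paper sandwiches $W(k,s)$ between bounds built from $W(0,s)$ and $W(n,s)$, whereas you keep the exact identity $W(k,s)=q^{m(n-d)}W(r,s)$ and bound the finitely many remainder values directly — arguably a slightly cleaner bookkeeping of the same argument.
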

	\begin{proof}
		From \eqref{sm}, $ \forall s \in \mathcal{S}_m $ we have
		\begin{align*}
		W(k,s)&=q^{n-d}W(k-n,s)\nonumber\\
		&=q^{l(n-d)}W(k-ln,s), \forall k =1,2,..., \lfloor k/n \rfloor.
		\end{align*}
		By choosing $ l= \lfloor k/n \rfloor$ and noting that $ 0\leq k-ln <n $, we have
		\begin{align*}
		q^{(\frac{k}{n}-1)(n-d)}W(0,s) <&  W(k,s)< q^{(\frac{k}{n})\frac{n-d}{n}}W(n,s)\\
		\Rightarrow \frac{1}{k}\log_q q^{k\frac{n-d}{n}-1} <& \frac{1}{k}\log_q W(k,s)< \frac{1}{k}\log_q q^{k\frac{n-d}{n}+1}.
		\end{align*}
		When $ k \rightarrow \infty $ the upper and lower bounds meet in $ 1-d/n $ which proves \eqref{s_minf}. 
	\end{proof}
	\begin{lem}\label{lem:s1}
		Let state $ s_1$ corresponds to the state where no erasure has happened, i.e. "$ \circ \circ \dots\circ$", we have 
		\begin{align}
		W(k,s_1)  &= \min_{s \in \mathcal{S}} W(k,s)\nonumber\\
		&= \begin{cases}
		1 & \text{if } k=1,\dots,d,
		\\
		W(k-d,s_d) & \text{if } k>d,
		\end{cases}\label{minw}
		\end{align}
		where $ s_d \in \mathcal{S}_m$ is the state that $ d $ erasures have happened in last $ d $ transmission and no erasure before; i.e. 
		\[ \underbrace{\circ  \circ \dots \circ}_{n-d} \underbrace{* \dots *}_d.\]
	\end{lem}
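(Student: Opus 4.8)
The plan is to run everything through the reformulation of the dynamic program set up just above: for an NSE channel each iteration of \eqref{wit} multiplies the cost-to-go by a gain $a(s,s')\in\{1,q\}$ — equal to $1$ along an erasure edge and $q$ along an error-free edge — and then minimises over admissible successors, so unrolling \eqref{its3}--\eqref{its4} gives $W(k,s)=q^{N(k,s)}$, where $N(k,s)$ is the least number of error-free transmissions among all admissible length-$k$ error patterns starting from $s$ (equivalently $N(k,s)=k-M(k,s)$ with $M(k,s)$ the largest number of erasures achievable in $k$ uses from $s$). Identifying a state with its window $w_s\in\{\circ,*\}^n$, a pattern $e\in\{\circ,*\}^k$ is admissible from $s$ iff every length-$n$ sub-block of the concatenation $w_se$ contains at most $d$ stars. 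Since $w_{s_1}=\circ^n$ and $w_{s_d}=\circ^{n-d}*^d$, the lemma becomes a combinatorial statement about such weight-constrained strings.

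First I would show the minimum over states is attained at $s_1$: any $e$ admissible from an arbitrary $s$ is admissible from $s_1$, because the sub-blocks of $\circ^ne$ lying inside $\circ^n$ are starless, each straddling sub-block $\circ^{n-j}e_1\cdots e_j$ has no more stars than the matching sub-block $w_{s,j+1}\cdots w_{s,n}e_1\cdots e_j$ of $w_se$, and the sub-blocks lying inside $e$ are common to both. Hence the admissible set for $s_1$ contains that of every $s$, so $N(k,s_1)\le N(k,s)$ and $W(k,s_1)=\min_{s\in\mathcal S}W(k,s)$. For $k\le d$ the all-erasure pattern $*^k$ is admissible from $s_1$ (every length-$n$ sub-block of $\circ^n*^k$ has at most $k\le d$ stars, and there is no sub-block lying inside $*^k$ since $k<n$), so $N(k,s_1)=0$, i.e. $W(k,s_1)=1$.

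For $k>d$ I would prove $W(k,s_1)\le W(k-d,s_d)$ and the reverse inequality separately. For the upper bound, take an optimal length-$(k-d)$ pattern $e^\star$ from $s_d$ and prepend $d$ erasures: because $s_d$ carries a full window of $d$ stars packed at its recent end, admissibility from $s_d$ forces $e^\star$ to begin with at least $n-d$ error-free symbols, and with that fact one checks directly that every length-$n$ sub-block of $\circ^n*^de^\star$ has at most $d$ stars — the straddling blocks see only $\circ$'s followed by at most $d$ of the prepended stars, and the blocks reaching into $e^\star$ are exactly admissible sub-blocks of $w_{s_d}e^\star$. Thus $*^de^\star$ is admissible from $s_1$ with the same error-free count as $e^\star$, giving $N(k,s_1)\le N(k-d,s_d)$. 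For the reverse bound I would argue that a maximum-erasure pattern from $s_1$ may be chosen to begin with $d$ consecutive erasures: since $*^d\circ^{k-d}$ is admissible, any maximum-erasure pattern has at least $d$ stars, so if it has a $\circ$ among its first $d$ slots then it has a star later, and exchanging the two preserves admissibility thanks to the slack created by the all-$\circ$ prefix; after the first $d$ slots are stars, the state reached is $s_d$ and the last $k-d$ slots form an admissible pattern from $s_d$, so $M(k,s_1)\le d+M(k-d,s_d)$, i.e. $N(k,s_1)\ge N(k-d,s_d)$. Combining, $W(k,s_1)=W(k-d,s_d)$, and since $s_d\in\mathcal S_m$ this is again $\min_sW(k,s)$.

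The delicate step is the reverse inequality for $k>d$: making the ``front-load the erasures'' exchange rigorous, because swapping a $\circ$ and a later $*$ in a string whose every length-$n$ window is weight-bounded could a priori overfill some window. I expect to handle this by choosing the exchange carefully (moving the latest star into the earliest free slot among the first $d$) and checking that the only windows whose star-count increases are straddling ones, which start with star-count at most their length inside $e$ and therefore remain within $d$; if this turns out to be awkward, an equivalent route is an induction on $k$ using \eqref{its3}--\eqref{its4} showing the DP picks the erasure edge out of $s_1$ and out of each state $\circ^{n-i}*^i$ with $i<d$, so that $W(k,s_1)=W(k-1,\circ^{n-1}*)=\cdots=W(k-d,s_d)$.
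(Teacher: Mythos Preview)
Your reformulation $W(k,s)=q^{\,k-M(k,s)}$, with $M(k,s)$ the maximum number of erasures in an admissible length-$k$ pattern from $s$, is correct and considerably more explicit than the paper's own argument, which is a two-sentence sketch asserting without justification that the DP follows the erasure edge for the first $d$ steps out of $s_1$ and that every other initial state does at least as well. Your containment argument for $W(k,s_1)=\min_s W(k,s)$ and your prepending construction for $W(k,s_1)\le W(k-d,s_d)$ are clean and complete; the paper does not isolate either step.

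The exchange step for the reverse inequality has a genuine gap. Take $n=4$, $d=2$, $k=5$: the pattern $e=*\circ*\circ*$ is admissible from $s_1$ and attains the maximum of three erasures, yet your rule (move the last star into the earliest $\circ$ among the first $d$ slots) swaps $e_2\leftrightarrow e_5$ to produce $e'=***\circ\circ$, and the straddling window $\circ\,e'_1e'_2e'_3=\circ***$ now carries three stars, violating the $(4,2)$ constraint. Your stated reason that such windows ``remain within $d$'' because their star-count is at most their length inside $e$ is exactly what fails, since that length (here $3$) can exceed $d$. The desired conclusion is still true in this instance --- $**\circ\circ*$ is an optimal pattern that does begin with $*^d$ --- but your swap mechanism does not establish it in general.

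Your fallback route, showing that the DP selects the erasure edge out of each $\circ^{n-i}*^{i}$ with $i<d$, is precisely the paper's line, and is the right repair. The one inequality neither you nor the paper actually verifies is $W(j,\circ^{n-i-1}*^{i+1})\le q\,W(j,\circ^{n-i-1}*^{i}\circ)$. In your combinatorial language this reads $M(j,\circ^{n-i-1}*^{i}\circ)\le M(j,\circ^{n-i-1}*^{i+1})+1$, and it follows by taking any $e$ admissible from the right-hand state and flipping its \emph{first} star to $\circ$: the two state-windows differ only at position $n$, and every window that sees that position but not the flipped star lies entirely before the first star of $e$, hence carries only the $i\le d-1$ stars coming from the state, so after the $+1$ it still has at most $d$. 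That single lemma then yields $W(k,s_1)=W(k-1,\circ^{n-1}*)=\cdots=W(k-d,s_d)$ directly.
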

	\begin{proof}
		Since for next $d$ iterates DP problem goes to states with erasure which has gain 1 according to \eqref{gain}. Therefore for $ k=1,\dots ,d $, $ W(k,s_1)  = 1 $ which is the minimum possible value (no more erasure is possible). For $ k>d $ since each starting state have go to erasure states until reaching the maximum allowed erasure number. Therefore at best their value can be equal to $ W(k,s_1) $. 
	\end{proof}
	Since starting from $ s_1 $, eventually DP reaches $ \mathcal{S}_m $ and according to Lemma \ref{lem:sm} we have 
	\begin{align} 
	\lim_{k \rightarrow \infty} \frac{1}{k} \log_q W(k,s_1)&=\lim_{k \rightarrow \infty} \frac{1}{k} \log_q 1^d\times W(k-d,s_d)\nonumber\\
	&=1-\frac{d}{n},\label{s1_lim}
	\end{align}
	Lemma \ref{lem:sm}, \eqref{s1_lim} and \eqref{c0Fs} results in $ C_{0f}=1-d/n $.
	This concludes the proof of Theorem \ref{thm:nsec0f}.
	\section{Proof of Theorem \ref{thm:nssc0f}~($ C_{0f} $ of NSS channel)} \label{sec:appd1}
	
	For deriving an upper bound for the zero-error feedback capacity, we assume that the decoder has access to the state information via a {\em side channel}. Hence, \eqref{c0Fs} can be used to obtain the zero-error feedback capacity for the resultant channel. Note that here we use the same state-dependant model that is used for NSE channel at which number and position of error determines the state. In other words, with channel state information, decoder finds out that an error has happened or not such that for the non-binary channel, an error can be $ q-1 $ different symbols.
	
	For NSS channel, $ \mathcal{G}(y,s'|s)=y $ when there is no error and $ \mathcal{G}(y,s'|s)=\mathcal{X}\backslash\{y\} $ when an error occurs. Consequently, the states of NSS channel (similar to NSE channel) can be partitioned into two subsets $ \mathcal{S}_I $ and $ \mathcal{S}_{II} $. Where, $ \mathcal{S}_I $ contains states that next action can take them in two states, one error-free transmission and the other one, with error. 
	Using (11), the solution of DP problem in first iteration for states in $ \mathcal{S}_I $ is
	\begin{align}
	\begin{split}
	W(1,s)&=\max_{P_{X|S}} \min\bigg\{\bigg(\max_{y \in \mathcal{Y}} \sum_{x \in \mathcal{X}\backslash\{y\}} P_{X|S}(x|s)\bigg)^{-1},\\
	&\qquad\qquad\qquad\qquad\qquad\bigg(\max_{y \in \mathcal{Y}} P_{X|S}(x=y|s)\bigg)^{-1}\bigg\}
	\end{split}\nonumber\\
	\begin{split}
	&\leq \min\bigg\{\max_{P_{X|S}}\bigg(\max_{y \in \mathcal{Y}} \sum_{x \in \mathcal{X}\backslash\{y\}} P_{X|S}(x|s)\bigg)^{-1},\\ 
	&\qquad\qquad\qquad\qquad \max_{P_{X|S}}\bigg(\max_{y \in \mathcal{Y}} P_{X|S}(x=y|s)\bigg)^{-1}\bigg\}
	\end{split}\nonumber\\
	&= \min\bigg\{\bigg(\sum_{x \in \mathcal{X}\backslash\{y\}} \frac{1}{q}\bigg)^{-1},\bigg(\frac{1}{q}\bigg)^{-1}\bigg\}\nonumber\\
	&= \min\{\frac{q}{q-1},q\}\nonumber\\
	&=\frac{q}{q-1},\; \forall s \in \mathcal{S}_I. \label{its12}
	\end{align}
	
	Again, this shows that the DP problem choses the route with error when it is possible. For states in $ \mathcal{S}_{II} $, there is no possibility for an error to happen. Hence,
	\begin{align}
	W(1,s)&=\max_{P_{X|S}} \min\bigg\{\bigg(\max_{y \in \mathcal{Y}} P_{X|S}(x=y|s)\bigg)^{-1}\bigg\}\nonumber\\
	&= \max_{P_{X|S}}\bigg(\max_{y \in \mathcal{Y}} P_{X|S}(x=y|s)\bigg)^{-1}\nonumber\\
	&=q,\; \forall s \in \mathcal{S}_{II}.\label{its22}
	\end{align}
	Similar to NSE channel, when $ s \in \mathcal{S}_I$, we have
	\begin{align}
	\begin{split}
	W(k,s)&=\\
	\max_{P_{X|S}} &\min_{s' \in \mathcal{S}}\Bigg\{W(k-1,s') \bigg(\max_{y \in \mathcal{Y}} \sum_{x \in \mathcal{X}\backslash\{y\}}P_{X|S}(x|s)\bigg)^{-1} \Bigg\}
	\end{split}\nonumber\\
	&\leq\frac{q}{q-1} \times W(k-1,s_*) ,\; \forall s \in \mathcal{S}_{I} \label{its31}
	\end{align}
	and if next state $ s' $ is in $ \mathcal{S}_{II} $,
	\begin{align}
	W(k,s)&= \max_{P_{X|S}} \Bigg\{W(k-1,s') \bigg(\max_{y \in \mathcal{Y}}P_{X|S}(x=y|s)\bigg)^{-1} \Bigg\}\nonumber\\
	&=q \times W(k-1,s') ,\; \forall s \in \mathcal{S}_{II} \label{its41}
	\end{align}
	which we have worst-case gain as follows
	\begin{align}
	a(s,s')=
	\begin{cases}
	\frac{q}{q-1} & \text{if } \mathcal{G}(y,s'|s)=\mathcal{X}\backslash\{y\},
	\\
	q & \text{if } \mathcal{G}(y,s'|s)=y.
	\end{cases}
	\end{align}
	Similar to NSE channel, in the iterative solution for NSS channel, there is a set of states $\mathcal{S}_m \subset \mathcal{S}$ such that 
	\begin{align*}
	W(k,s)  &\leq q^{n-d}(\frac{q}{q-1})^dW(k-n,s), \; \forall s \in \mathcal{S}_m\\
	&= \frac{q^n}{(q-1)^d}W(k-n,s),
	\end{align*}
	which means that after $ n $ iterations, the DP problem comes back to starting state with gain of $ \frac{q^n}{(q-1)^d} $. 
	
	Similar to NSE channel, it is easy to see that initial state $s_1$ which corresponds to error-free state has the minimum $W(k,\cdot)$ and therefore
	\begin{align} 
	\lim_{k \rightarrow \infty} \frac{1}{k} \log_q W(k,s_1)&=\lim_{k \rightarrow \infty} \frac{1}{k} \log_q 1^d\times W(k-d,s_d)\nonumber\\
	&\leq\frac{1}{n} \log_q \bigg(\frac{q^n}{(q-1)^d}\bigg).\nonumber
	\end{align}
	Hence, since the state information is not available for original channel this gives an upper bound for zero-error feedback capacity
	\begin{align*}
	C_{0f}&\leq\lim_{k \rightarrow \infty} \frac{1}{k} \log_q W(k,s_1)\\
	&\leq \frac{1}{n} \log_q \bigg(\frac{q^n}{(q-1)^d}\bigg)\\
	&=\frac{1}{n} (\log_q q^n -\log_q (q-1)^d)\\
	&=\frac{1}{n}(n-d \log_q (q-1))\\
	&=1- \frac{d}{n}\log_q (q-1).
	\end{align*}	
	Finally, if $ d\geq n/2 $ since all inputs are adjacent $ C_{0f}=0 $. This concludes the proof.
	\section{Proof of Proposition  \ref{prop:out}~(Size of the output sequences)} \label{app:pflam}
	The total number of state trajectories after $N$-step starting from state $s_i$ is equal to sum of $i$-th row of $\mathcal{A}^N$ \cite{lind1995introduction}. Hence, because of one to one correspondence between state sequence and output sequence then $|\mathscr{Y}(s_0,x^N) |=z_0 \mathcal{A}^N \mathbbm{1}$.
	
	Next we show the upper and lower bounds of \eqref{nselamd}. According to Perron-Ferbenius Theorem, for an irreducible $k\times k$ matrix $\mathcal{A} $ (or equivalently adjacency matrix of a strongly connected graph), the entries of eigenvector $v_{PF} \in \mathbb{R}^{k}$ corresponding to $\lambda_{PF}$ are strictly positive \cite[Thm. 8.8.1]{godsil2001strongly}. Therefore, there is a transformation matrix, $T=\begin{bmatrix} v_{PF}& W \end{bmatrix} \in \mathbb{R}^{k\times k} $, (where $W \in \mathbb{R}^{k\times (k-1)}$ is the matrix consisting other eigenvectors or generalized eigenvectors) such that
	\begin{align*}
	\mathcal{A}=T\begin{bmatrix} 
	\lambda_{PF} & 0\\
	0& J
	\end{bmatrix}T^{-1},
	\end{align*}
	where $J$ is block diagonal Jordan form of remaining eigenvalues.  Assume the $i$-th element of $z_0$ is equal to 1. Hence,
	\begin{align*}
	z_0\mathcal{A}^N&=z_0\begin{bmatrix} v_{PF}& W \end{bmatrix}
	\begin{bmatrix} 
	\lambda_{PF}^N & 0\\
	0& J^N
	\end{bmatrix}T^{-1}\\
	&=\begin{bmatrix} v_{PF,i}& W_i \end{bmatrix}
	\begin{bmatrix} 
	\lambda_{PF}^N & 0\\
	0& J^N
	\end{bmatrix}
	\begin{bmatrix} T_1\\
	T_2\end{bmatrix}\\
	&=\begin{bmatrix} v_{PF,i}& W_i \end{bmatrix}
	\begin{bmatrix}\lambda_{PF}^N T_1 \\
	J^N T_2 \end{bmatrix}\\
	&=\lambda_{PF}^N ( v_{PF,i}T_1+\lambda_{PF}^{-N} W_i J^N T_2),
	\end{align*}
	where $v_{PF,i}$ is $i$-th element of $v_{PF}$,  $W_i$ is $i$-th row of $W$, and $T_1$ and $T_2$ are non-zero sub-matrices of $T^{-1}$ with appropriate dimensions. Accordingly,
	\begin{align*}
	z_0\mathcal{A}^N\mathbbm{1}&=\lambda_{PF}^N ( v_{PF,i}T_1\mathbbm{1}+\lambda_{PF}^{-N} W_i J^N T_2\mathbbm{1}),
	\end{align*}
	where $v_{PF,i}T_1\mathbbm{1}$ is a constant scalar. Moreover, since $\lambda_{PF}$ is dominant with respect to elements of $J$, the matrix $\lambda_{PF}^{-N} J^N $ is bounded by a constant matrix, $ K$ . Therefore, by choosing $\beta \geq v_{PF,i}T_1\mathbbm{1} + W_i K T_2\mathbbm{1}$ the upper bound in \eqref{nselamd} are valid. This concludes the proof.
	
	\section{Proof of Theorem  \ref{prop:nsel}~(NSE channel topological bound)} \label{app:nselw}
	In this proof, without loss of generality, assume that the channel starts from an error-free state as the channel starts enumerating errors from the first transmission. The following lemma shows why. 
	\begin{lem} \label{lemgen}
		Let $s_1$ corresponds to the error-free state then
		\begin{align*}
		\mathscr{Y}(s_i,x^N) \subseteq \mathscr{Y}(s_1,x^N), \, i=\{1,\dots, |\mathcal{S}|\}.
		\end{align*}
	\end{lem}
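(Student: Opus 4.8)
The plan is to show that every erasure pattern on the transmissions $1,\dots,N$ that is \emph{admissible} when the channel starts in state $s_i$ is also admissible when it starts in the error-free state $s_1$. Since the output of an NSE channel at each time is determined solely by the current input symbol and by whether that symbol was erased — never by the channel state itself — this will immediately give the inclusion of the output-sequence sets.

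First I would fix notation: identify a starting state $s\in\mathcal{S}$ with the length-$n$ erasure indicator word it records on the time slots $-n+1,\dots,0$ immediately preceding transmission $1$, and identify a run of the channel over $1,\dots,N$ with an erasure set $E\subseteq\{1,\dots,N\}$. Admissibility from $s$ means: for every $t\ge 1$, the sliding window $\{t-n+1,\dots,t\}$ contains at most $d$ erasures, where slots with index $\le 0$ are counted according to $s$. Now fix $y^N\in\mathscr{Y}(s_i,x^N)$ and let $E$ realize it from $s_i$. For $t\ge n$ the window $\{t-n+1,\dots,t\}$ lies entirely inside $\{1,\dots,N\}$, so the number of erasures it contains does not depend on the starting state, and is $\le d$ by hypothesis. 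For $1\le t\le n-1$ the window splits into the $n-t$ pre-start slots and the slots $1,\dots,t$; writing its erasure count as $e(s,t)+\lvert E\cap\{1,\dots,t\}\rvert$, where $e(s,t)$ counts the erasures recorded by $s$ in those $n-t$ pre-start slots, we have $e(s_1,t)=0\le e(s_i,t)$. Hence the window count under $s_1$ is at most the window count under $s_i$, which is $\le d$.

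Therefore $E$ is admissible from $s_1$, and running the channel from $s_1$ with input $x^N$ and erasure pattern $E$ produces exactly $y^N$ — each $k\in E$ yields an erasure and each $k\notin E$ yields $x_k$, regardless of the state — so $y^N\in\mathscr{Y}(s_1,x^N)$. This proves $\mathscr{Y}(s_i,x^N)\subseteq\mathscr{Y}(s_1,x^N)$, which justifies assuming without loss of generality in the proof of Theorem~\ref{prop:nsel} that the channel begins in the error-free state. The argument is really just a monotonicity observation — the per-window erasure budget can only be harder to meet when erasures are already ``on the books'' from the starting state — so there is no genuine obstacle; the only care needed is the bookkeeping of which pre-start slots fall into each short window and the remark that windows with right endpoint $t\ge n$ are entirely state-independent. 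The edge case $N<n$, in which every window straddles the boundary, is handled by the same split-and-compare step verbatim.
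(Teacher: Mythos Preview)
Your proof is correct and takes essentially the same approach as the paper: both argue that the error-free initial state $s_1$ imposes the weakest sliding-window constraints on future erasure patterns, so any pattern admissible from $s_i$ is also admissible from $s_1$, and hence produces the same output when applied to $x^N$. Your formalization via the explicit erasure set $E$ and the case split on whether the window straddles time $0$ is more carefully written than the paper's informal version, but the underlying monotonicity idea is identical.
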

	\begin{proof}
		Since after $n$ transmission any state is achievable, then any sequence from $n+1$-th symbol of output sequence can be constructed from any initial state. Whereas, for first $n$ symbols of the output sequence, initiating from each state might be difference, since the number of walks are different. For initial state $s_1$ since there was no erasure in past $n$ transmissions, the least constraints are put to the upcoming erasure patterns and consequently output trajectories. Whereas, starting from other state since some erasures have happened not all combinations of erasures that are achievable from state $s_1$ can be formed in the output. Therefore, the number of walks from state $s_1$ is maximum and any combinations that are possible from other states can be constructed in the output.
	\end{proof} 
	Therefore, any zero-error code that works for initial state $s_1$ will work for other initial states as well.
	
	Let $ c^N_1 $ be the first codeword of size $ N=K n,\, K \in \mathbb{N} $ for which adjacent inputs denoted by $ \mathscr{Q}(c^N_1) $ depend on the number and position of erasures of each output sequence. Let $\mathcal{G}(y^N)$ denotes the set of input sequences that produce output sequence $ y^N $. Hence,
	\begin{align}
	\mathscr{Q}(c^N_1)&=\bigcup_{y^N \in \mathscr{Y}(s_0,c^N_1)} \mathcal{G}(y^N) \nonumber \\
	\Rightarrow |\mathscr{Q}(c^N_1)|&\leq |\mathscr{Y}(s_0,c^N_1)||\mathcal{G}(y^N)| \label{q1}
	\end{align}
	Since each path has at most $ \gamma \frac{d}{n}N=\gamma K d $ number of erasures (worst-case loop in state transition) in which $ \gamma $ is a constant. Therefore, $ |\mathcal{G}(y^N)| \leq \gamma'q^{K d} $, where $ \gamma'=q^{\gamma} $. Therefore, \eqref{q1} and \eqref{outlam} gives
	\begin{align*}
	|\mathscr{Q}(c^N_1)|&\leq \gamma'q^{K d} \times \beta \lambda_{PF}^N \\
	&=\gamma'\beta (q^{\frac{d}{n}}\lambda_{PF})^N.
	\end{align*}
	By choosing non-adjacent inputs as the codebook, results in an error-free transmission and henceforth the number of distinguishable inputs is lower bounded by
	\begin{align*}
	M(N)&\geq \frac{q^N}{|\mathscr{Q}(c^N_1)|}\\
	&\geq \frac{q^N}{\gamma'\beta (q^{\frac{d}{n}}\lambda_{PF})^N},
	\end{align*}
	Hence
	\begin{align*}
	\frac{1}{N}\log_q M(N)&\geq \log_q \frac{q}{\gamma'\beta (q^{\frac{d}{n}}\lambda_{PF})^N}\\
	&=1-\frac{d}{n}-\log_q \lambda_{PF} - \frac{1}{N} \log_q (\gamma'\beta).
	\end{align*}
	If $ N $ is large last term vanishes and result in \eqref{nselamd}.

	\section{Proof of Proposition  \ref{prop:nssl}~(NSS channel topological bounds)} \label{app:nsslw}
	First, we give the following Lemma.
	\begin{lem} \label{lemnss}
		For NSS channel, if $ y^N\in \mathscr{Y}(s_0,x^N) $ then 
		\begin{align}
		x^N \in \mathscr{Y}(s_0,y^N).
		\end{align}
	\end{lem}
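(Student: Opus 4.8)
The plan is to work with the additive description of the NSS channel recorded in the footnote: write the channel as $Y_k = X_k + Z_k \bmod q$ with $X_k, Y_k, Z_k \in \mathbb{Z}_q$, where the error pattern $Z$ is any sequence having at most $d$ nonzero entries in every window of $n$ consecutive slots, and the initial state $s_0$ fixes the noise values occupying the $n$ slots that precede the first transmission. With this encoding, a sequence $y^N$ lies in $\mathscr{Y}(s_0,x^N)$ precisely when there is an admissible error pattern $z^N$ (admissible meaning it respects the sliding-window bound together with the $n$ pre-loaded values given by $s_0$) such that $y_k = x_k + z_k \bmod q$ for $k = 1,\dots,N$.

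Given such a witnessing pattern $z^N$ for $y^N \in \mathscr{Y}(s_0,x^N)$, I would set $z'_k := (-z_k)\bmod q$ for each $k$. The key elementary observation is that $z'_k \neq 0$ if and only if $z_k \neq 0$, so $z'^N$ has exactly the same set of erroneous positions as $z^N$. Since admissibility of an error pattern depends only on which positions carry errors — the particular nonzero swap value in an erroneous position being unconstrained in the NSS model — $z'^N$ is admissible whenever $z^N$ is. Feeding $y^N$ through the channel with error pattern $z'^N$ then gives $y_k + z'_k = x_k + z_k - z_k = x_k \bmod q$ for every $k$, i.e. $x^N$ is a possible channel output for input $y^N$, which is exactly $x^N \in \mathscr{Y}(s_0,y^N)$.

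The only point that needs a little care is the bookkeeping of the initial state: strictly, negating the noise also negates the $n$ pre-loaded values, so the construction exhibits $x^N$ as an output of $y^N$ starting from the state $\bar s_0$ obtained from $s_0$ by negating each recorded swap. But $\bar s_0$ has errors in exactly the same positions as $s_0$, and only these positions enter the window constraint, so this is immaterial; in particular, in the situation relevant to the proof of Proposition \ref{prop:nssl}, where (as in the NSE argument) one reduces to the error-free initial state, $\bar s_0 = s_0$ and the statement holds verbatim. I expect this initial-state accounting to be the only mildly delicate step; the heart of the lemma is simply the invertibility of addition modulo $q$ together with the fact that the NSS sliding-window constraint sees error locations rather than error values.
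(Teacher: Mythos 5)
Your proof is correct and follows essentially the same route as the paper's: the paper's argument likewise observes that the same error positions form an admissible pattern for input $y^N$ and that at each such position the symbol can be "flipped" back to $x_k$, which is exactly your negated noise pattern $z'_k=-z_k \bmod q$. Your version merely makes this explicit via the additive representation and adds the (harmless) initial-state bookkeeping that the paper leaves implicit.
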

	\begin{proof}
		The proof is straightforward. Let $ y^N=y_1 \dots y_N $  where some symbols in the sequence are erroneous. Now assume that $ y^N $ is fed to the channel, since the same position and number of erroneous symbols is in one of the paths of the output, the exact symbols can be flipped and $ x=x_1 \dots x_N $ can be produced.
	\end{proof}
	
	Similar to NSE channel, in this proof, without loss of generality, assume that the channel starts from an error-free state as the channel starts enumerating errors from the first transmission. The following lemma shows the reason. 
	\begin{lem} \label{lemgen2}
		Let $s_1$ corresponds to the error-free state then
		\begin{align*}
		\mathscr{Y}(s_i,x^N) \subseteq \mathscr{Y}(s_1,x^N), \, i=\{1,\dots, |\mathcal{S}|\}.
		\end{align*}
	\end{lem}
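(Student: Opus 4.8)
The plan is to mimic the proof of Lemma~\ref{lemgen} for the NSE channel, exploiting the additive $\mathbb{Z}_q$ description of the NSS channel from the footnote after Definition~\ref{def:nsschannel}. Writing $Y_k = X_k + Z_k \bmod q$, the noise word $z^N$ satisfies the constraint that every sliding window of $n$ consecutive times contains at most $d$ nonzero $Z$'s, and an initial state $s_i$ is precisely a prescribed error pattern on the $n$ transmissions preceding time $1$. For a fixed input $x^N$ the output is determined by $z^N$, so $\mathscr{Y}(s_i,x^N) = \{\, x^N \oplus z^N : z^N \in \mathcal{Z}(s_i)\,\}$, where $\oplus$ is componentwise addition mod $q$ and $\mathcal{Z}(s_i)$ is the set of length-$N$ noise words consistent with the sliding-window bound and with the initial state $s_i$. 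Because the NSS state representation chosen in Section~II gives, for a fixed input, a one-to-one correspondence between state trajectories and output sequences, the map $z^N \mapsto x^N \oplus z^N$ is a bijection from $\mathcal{Z}(s_i)$ onto $\mathscr{Y}(s_i,x^N)$; hence it suffices to prove the purely combinatorial containment $\mathcal{Z}(s_i) \subseteq \mathcal{Z}(s_1)$ for every $s_i\in\mathcal{S}$, where $s_1$ is the error-free state.

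The key step is this containment. Fix $z^N \in \mathcal{Z}(s_i)$. The only admissibility constraints that refer to the past — and hence can differ between initial states — are the window bounds for windows ending at times $t = 1,\dots,n-1$: such a window consists of the past transmissions $t-n+1,\dots,0$ together with $z_1,\dots,z_t$, and admissibility demands (number of past errors in that window) $+$ (number of nonzero entries among $z_1,\dots,z_t$) $\le d$. Since $s_1$ records zero past errors, the corresponding bound imposed by $s_1$ is merely (number of nonzero entries among $z_1,\dots,z_t$) $\le d$, which is implied by the bound from $s_i$ because the past-error count is nonnegative. For windows ending at any time $t \ge n$, the window lies entirely within $\{1,\dots,N\}$ and its bound does not involve the initial state at all. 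Therefore every $z^N$ admissible from $s_i$ is admissible from $s_1$, i.e. $\mathcal{Z}(s_i)\subseteq\mathcal{Z}(s_1)$, and consequently $\mathscr{Y}(s_i,x^N)\subseteq\mathscr{Y}(s_1,x^N)$.

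I expect the main obstacle here to be purely expository rather than substantive: one must pin down that the initial-state information is \emph{exactly} the error pattern on the previous $n$-window, and verify that no future window bound other than those ending at times $1,\dots,n-1$ depends on it, so that the nonnegativity argument above really does cover all constraints. As with Lemma~\ref{lemgen}, the payoff is that any zero-error code designed for the error-free initial state also works for every other initial state, so that in the remainder of the proof of Theorem~\ref{prop:nssl} we may assume without loss of generality that the channel starts in $s_1$.
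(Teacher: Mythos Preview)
Your argument is correct and follows essentially the same idea as the paper's proof: both rest on the observation that the error-free initial state $s_1$ imposes the weakest sliding-window constraints on the forthcoming error pattern, so any admissible error sequence from $s_i$ is also admissible from $s_1$. The paper states this in a single informal sentence, whereas you make it precise via the additive noise representation $Y_k=X_k+Z_k\bmod q$, reduce the claim to the combinatorial containment $\mathcal{Z}(s_i)\subseteq\mathcal{Z}(s_1)$, and verify explicitly that only the windows ending at times $1,\dots,n-1$ depend on the initial state; your version is therefore a rigorous fleshing-out of the paper's sketch rather than a different route.
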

	\begin{proof}
		Initiating from state $s_1$, since there is no error in past $n$ transmissions, the least constraints are put to the upcoming error patterns. Therefore any combination of errors and symbols starting from other states can be achieved from state $s_1$, as well.  
	\end{proof}
	Similar to Appendix \ref{app:nselw}, let $ c^N_1 $ be a codeword of size $ N=K n,\, K \in \mathbb{N} $. As a result of Lemma \ref{lemnss}, $ |\mathcal{G}(y^N)|=|\mathscr{Y}(s_0,c^N_1)| $. Hence
	\begin{align*}
	|\mathscr{Q}(c^N_1)|&=|\mathscr{Y}(s_0,c^N_1)||\mathcal{G}(y^N)|\\
	&\leq \beta \lambda_{PF}^N \times \beta \lambda_{PF}^N \\
	&=\beta^2 (\lambda_{PF})^{2N}.
	\end{align*}
	Again, by choosing non-adjacent inputs as the codebook, results in an error-free transmission and accordingly
	\begin{align*}
	M(N)&\geq \frac{q^N}{|\mathscr{Q}(c^N_1)|}\\
	&\geq \frac{q^N}{\beta^2 (\lambda_{PF})^{2N}},
	\end{align*}
	Hence
	\begin{align*}
	C_0&\geq \log_q \frac{q}{\beta^2 (\lambda_{PF})^{2N}\lambda_{PF})^N}\\
	&=1-\frac{d}{n}-2\log_q \lambda_{PF} - \frac{2}{N} \log_q \beta
	\end{align*}
	If $ N $ is large last term vanishes and proves the lower bound of \eqref{nsslmd}.
	
	
	\section{Proof of Theorem \ref{thm:finitem}~(Bounded estimation over finite memory channels)} \label{sec:appd2}
	In the sequel, at first zero-error capacity for channels with memory is defined and its relationship with {\em maximin information} are derived. Next, uniformly bounded estimation problem for the LTI system \eqref{lti1}-\eqref{lti2} is given. From now on, for simplifying the representation, logarithm bases are omitted, which is still equal to size of the input alphabet set.
	\subsection{Zero-error capacity and maximin information}
	
	\begin{defi}[Taxicab Connectivity] \textcolor{white}{.}
		\begin{itemize}
			\item A pair of points $ (x,y) $ and $ (x',y') \in \llbracket X,Y \rrbracket$ are {\em taxicab connected} if a finite sequence of points $ \{(x_i,y_i)\}_{i=1}^n \subset \llbracket X,Y \rrbracket$ exists such that $ x_i=x_{i-1} $ and/or $ y_i=y_{i-1} $ for all $ i \in \{2,\dots,n\} $. Furthermore, a set $ \mathcal{A} \subset \llbracket X,Y \rrbracket $ is called taxicab connected if all points in $ \mathcal{A} $ are taxicab connected;
			\item A pair of sets $ \mathcal{A},\mathcal{B} \subset \llbracket X,Y \rrbracket $ are {\em taxicab isolated} if there are no points in $ \mathcal{A}$ and $\mathcal{B}$ that are taxicab connected;
			\item A {\em taxicab partition} of $ \llbracket X,Y \rrbracket $ is a {\em taxicab-isolated partition} $ \mathcal{T}(X;Y)=\{\mathcal{A}_i \}_{i=1}^n $ such that any $ \mathcal{A}_i $ and $ \mathcal{A}_j $ are taxicab isolated if $ i \ne j $ and $ \mathcal{A}_i $ is taxicab connected for all $ i $.
		\end{itemize}
	\end{defi}
	There exists a unique taxicab partition $ \llbracket X|Y \rrbracket_* $ that satisfies $ |\mathcal{T}(X;Y)| \leq | \llbracket X|Y \rrbracket_*| $ for any $ \llbracket X|Y \rrbracket $ \cite{nair2013nonstochastic}.
	{\em Maximin information} can be defined as 
	\begin{align}
	I_*(X;Y):=\log| \llbracket X|Y \rrbracket_* |. \label{maximin}
	\end{align}
	Based on the above definitions, we can give the following result.
	\begin{prop}
		For any uncertain channel with finite memory $ m $, 
		\begin{align}
		C_0=\lim_{t \rightarrow \infty} \sup_{X(0:t):\llbracket X \rrbracket^{t+1} }\frac{1}{t+1} I_*[X(0:t);Y(0:t)]. \label{minmaxinfo}
		\end{align}
	\end{prop}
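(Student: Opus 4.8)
The plan is to reduce \eqref{minmaxinfo} to the definition \eqref{c0def} of $C_0$ in two stages: first, for each fixed horizon $t$, identify $\sup_{X(0:t)} I_*[X(0:t);Y(0:t)]$ with the logarithm of the largest zero-error block code of length $t+1$; second, pass from the supremum over $t$ in \eqref{c0def} to a limit by a super-additivity (Fekete) argument. Write $N(t+1):=\max\{|\mathcal F|:\mathcal F\in\mathscr F\}$, so that \eqref{c0def} reads $C_0=\sup_{t\ge 0}\tfrac{\log N(t+1)}{t+1}$.

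The per-horizon identity $\sup_{X(0:t)} I_*[X(0:t);Y(0:t)]=\log N(t+1)$ is the heart of the matter, and I would prove it by the same taxicab argument as in the memoryless case of \cite{nair2013nonstochastic}, observing that it depends only on the joint range $\llbracket X(0:t),Y(0:t)\rrbracket$ and is therefore insensitive to channel memory. For ``$\ge$'': given a code $\mathcal F\in\mathscr F$ of size $N(t+1)$, choose an input uv $X(0:t)$ with range $\mathcal F$; since distinct codewords have disjoint output ranges and (by the unrelatedness of the channel error pattern and the input, cf.\ A3) each codeword's conditional output range equals the full set of outputs the channel can produce from it, the points above each codeword form a taxicab cell isolated from the rest, so $\llbracket X(0:t)|Y(0:t)\rrbracket_*$ has $|\mathcal F|$ cells and $I_*=\log|\mathcal F|$. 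For ``$\le$'': take any input uv $X(0:t)$ and its (unique, maximal) taxicab partition $\{\mathcal A_i\}_{i=1}^M$ with $M=|\llbracket X(0:t)|Y(0:t)\rrbracket_*|$; pick one input $x_i$ from the $X$-projection of $\mathcal A_i$. For every $y\in\llbracket Y(0:t)|x_i\rrbracket$ the point $(x_i,y)$ is taxicab-connected to a point of $\mathcal A_i$, so the whole output range of $x_i$ lies in the $Y$-projection of $\mathcal A_i$; these $Y$-projections are pairwise disjoint by taxicab-isolation, hence $\{x_1,\dots,x_M\}\in\mathscr F$ and $M\le N(t+1)$. Taking the supremum over $X(0:t)$ gives the identity.

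To replace the supremum over $t$ by a limit, I would show that $a_n:=\log N(n)$ is super-additive. Given a length-$a$ code and a length-$b$ code, both in $\mathscr F$ (hence zero-error from \emph{every} initial state and every admissible noise sequence), their concatenation is again zero-error from every initial state: decode the first block from $y(0:a-1)$ with the first code — valid for the actual initial state, whatever it is — then decode the second block from $y(a:a+b-1)$ with the second code — valid for whatever channel state the first block left behind, and for whatever continuation noise is admissible from that state. This is exactly where the finite-memory hypothesis is used: it guarantees that the ``state'' entering the second block is a well-defined finite object (carried by the last $m$ symbols), so that ``zero-error from every initial state'' is meaningful and blocks compose. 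Hence $N(a+b)\ge N(a)N(b)$, i.e.\ $a_{a+b}\ge a_a+a_b$, and Fekete's lemma gives $\lim_{n\to\infty} a_n/n=\sup_{n\ge 1} a_n/n=C_0$. Combining this with the per-horizon identity yields \eqref{minmaxinfo}.

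The step I expect to be the main obstacle is the careful verification of the per-horizon identity in the non-stochastic, memoryful setting rather than any of the limiting arithmetic: one must check that the unrelatedness of the channel noise and the input (A3) really does force $\llbracket Y(0:t)|x\rrbracket$ to equal the full set of channel outputs producible from the word $x$ over all admissible noise sequences and initial states — without this the correspondence with $\mathscr F$ from \eqref{c0def} breaks — and that the maximality/uniqueness of the taxicab partition (stated just after the definition of taxicab connectivity) supplies precisely the extremal cell count needed in \emph{both} directions of the argument. Once these points are nailed down, everything else is routine.
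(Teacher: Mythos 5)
Your proposal is correct in substance but follows a genuinely different route from the paper's. The paper splits the argument into two halves that both hinge on the memory parameter $m$: for the converse it asserts $|\llbracket X(0:t)|Y(0:t)\rrbracket_*|\leq \sup_{\mathcal F}|\mathcal F|+m$, absorbing the initial-state dependence of the taxicab partition into an additive ``$+m$'' slack that then vanishes after normalizing by $t+1$; for achievability it inserts a guard interval of $m$ idle symbols to ``clear the memory,'' reduces to a memoryless channel from $\mathcal X^{t+1}$ to $\mathcal Y^{t-m+1}$, invokes the memoryless identity $C_0=\sup I_*/(t+1)$ of \cite{nair2013nonstochastic} as a black box, and shows the guard-interval rate loss $R_\epsilon/(1+m/(t_\epsilon+1))\to R_\epsilon$. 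You instead prove the exact per-horizon identity $\sup_{X(0:t)}I_*[X(0:t);Y(0:t)]=\log N(t+1)$ by redoing the taxicab argument with the initial state treated as part of the channel uncertainty (so the code extracted from the partition is automatically zero-error for \emph{every} initial state, matching the definition of $\mathscr F$ in \eqref{c0def}), and then obtain the limit by concatenating universal codes to get $N(a+b)\geq N(a)N(b)$ and applying Fekete's lemma. This buys a cleaner and more self-contained proof: it avoids the paper's somewhat ad hoc ``$+m$'' cardinality correction, dispenses with the lifting construction, and isolates exactly where finite memory is used (it makes ``the state entering the second block'' a well-defined object drawn from the same set of initial states the second code is already robust to, which is what lets blocks compose). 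The price is that you must carry out the taxicab bookkeeping yourself rather than citing the memoryless case, and you should make explicit the one modeling point you already flagged: in the ``$\geq$'' direction the test input $X(0:t)$ with range $\mathcal F$ must be unrelated to the channel's error pattern and initial state, so that $\llbracket Y(0:t)|x\rrbracket$ is the full set of outputs producible from $x$ over all admissible noise sequences and initial states; this is consistent with the paper's framework (and with A3) but is an assumption on the sample space rather than a consequence of the channel definition alone.
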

	\begin{proof} 
		Since $ \llbracket X(0:t)|Y(0:t)\rrbracket_* $ is a partition of $ \llbracket X(0:t) \rrbracket $ which can be used as a zero-error code book. However, it depends on the initial state of the channel. But, the memory is finite and at worst case, it will effect, $ m $ symbols. Therefore,
		\begin{align}
		|\llbracket X(0:t)|Y(0:t)\rrbracket_*| \leq& \sup_{t\geq 0, \mathcal{F} \in \mathscr{F}(\mathcal{X}^{t+1})}|\mathcal{F}|+m\nonumber \\
		\Rightarrow I_*[X(0:t);Y(0:t)] -m\leq & \sup_{t\geq 0, \mathcal{F} \in \mathscr{F}(\mathcal{X}^{t+1})}  \log|\mathcal{F}| \label{i1}
		\end{align}
		Dividing \eqref{i1} by $ t+1 $, gives
		\begin{align}
		\begin{split}
		\sup_{{t\geq 0},{X(0:t):\llbracket X \rrbracket^{t+1} }}\frac{I_*[X(0:t);Y(0:t)]-m}{t+1}  \leq\\
		\sup_{{t\geq 0},{\mathcal{F} \in \mathscr{F}(\mathcal{X}(0:t))}} \frac{\log|\mathcal{F}|}{t+1} & =C_0. \label{c0l}
		\end{split}
		\end{align}
		Now, we show that $ C_0 $ can be achieved with large enough $ t $. This has been shown for memoryless uncertain channels in \cite{nair2013nonstochastic}. Omitting first $ m $ outputs for any $ t>m $ the resultant channel is memoryless from input space $ \mathcal{X}^{t+1} $ to output space $ \mathcal{Y}^{t-m+1} $, similar to \cite{feinstein1959coding} for stochastic channels. Let $ \mathscr{G}(\mathcal{X}^{t+1}) $ be the space of all zero-error codes of size $ t $ for the memoryless channel.
		Note that any coding method using this lifting is sub-additive. Let $ a(t)= \sup_{\mathcal{F} \in \mathscr{G}(\mathcal{X}^{t+1})}  |\mathcal{F}| $ then for any $ i,j>m $, $ a(i+j)\geq a(i)+a(j) $. Therefore, we have
		\begin{align}
		R^*:=&\sup_{t\in \mathbb{Z}_{\geq 0}}\frac{1}{t+1} \log a(t)\nonumber \\
		=& \lim_{t \rightarrow \infty} \frac{1}{t+1} \log a(t).
		\end{align}
		
		Let $ R_{\epsilon}=C_0-\epsilon $ be the rate slightly below the zero-error capacity that can be achieved with code length of $ t_{\epsilon}+1 $. We show that by lifting method described above, in fact rates close to $ R_{\epsilon} $ are achievable. Consider a code length of $ t=t_{\epsilon}+m+1 $ in which first $ m $ symbols are not used for data transmission and serves as a guarding space to clear the memory. Next $ t_{\epsilon}+1 $ symbols are the information that is wished to send with rate $ R_{\epsilon} $. In this case the overall rate will be as follows 
		\begin{align}
		R(t_{\epsilon},m)=&\frac{(t_{\epsilon}+1)R_{\epsilon}}{t_{\epsilon}+m+1}\nonumber\\
		=&\frac{R_{\epsilon}}{1+\frac{m}{t_{\epsilon}+1}}.
		\end{align}
		If $ t_{\epsilon} $ is large enough\footnote{We have not shown supperadditve property for the original channel, however if the rate $ R_{\epsilon} $ is achieved at code length of $ t_{\epsilon} $, it is achievable at any length of $ k t_{\epsilon}, \forall k\in \mathbb{Z}_{\geq 0} $, as well. So It is possible to make it large enough.}, Then 
		\begin{align}
		\lim_{t_{\epsilon} \rightarrow \infty} R(m,t_{\epsilon})=&\lim_{t_{\epsilon} \rightarrow \infty} R_{\epsilon}=C_0. \label{rep}
		\end{align}
		In other words, by memoryless setup described above, rates close to zero-error capacity can be achieved. On the other hand, since \eqref{minmaxinfo} holds for any memoryless channel~\cite{nair2013nonstochastic}, thus there exists a  $ \llbracket X(0:t_{\epsilon}+m)|Y(0:t_{\epsilon}+m)\rrbracket_* $ such that result in $ R_{\epsilon} $. Consequently, since  $ R_{\epsilon} $ can be made very close to $ C_0 $, the associated $ \llbracket X(0:t_{\epsilon}+m)|Y(0:t_{\epsilon}+m)\rrbracket_* $ result in $ I_*(X(0:t_{\epsilon}+m);Y(0:t_{\epsilon}+m)) $ that can be made very close to the upper bound in \eqref{c0l}. This concludes the proof.
	\end{proof}
	\subsection{State estimation of LTI systems}
	In this section, we consider the estimation of linear time invariant systems via discrete uncertain channel with finite memory. Fig. \ref{est} demonstrates the state estimation system configuration.
	
	Suppose $ S(t)=\gamma (t.Y(0,t)) \in \mathcal{S}, t\in \mathbb{Z}_{\geq 0} $ be the channel's input where $ \gamma $ is an encoder operator. Each Symbol $ S(t) $ is then transmitted over the channel with finite memory. The received symbol $ Q(t) \in \mathcal{Q} $ is decoded and a causal prediction $ \hat{X}(t+1) $ of $ X(t+1) $ is produced by means of another operator $ \eta $ as 
	\begin{align}
	\hat{X}(t+1)=\eta(t, Q(0:t)) \in \mathbb{R}^n, \,\hat{X}(0)=0.
	\end{align}
	We denote the estimation error as $ E(t):=X(t)-\hat{X}(t) $.
	
	Now, we give the proof of Theorem \ref{thm:finitem}.
	\begin{proof} {\em 1) Neccessity}:
		Assume a coder-estimator achieves uniform bounded estimation error. By change of coordinates, it can be assumed that $ A $ matrix is in {\em real Jordan canonical form} which consists of $ m $ square blocks on its diagonal, with the $ j $-th block $ A_j \in \mathbb{R}^{n_j\times n_j} ,\, j=\{1,\dots,m\}$. Let $ X_j(t),\hat{X}_j(t),E_j(t) \in \mathbb{R}^{n_j} $ and so on, be the corresponding $ j $-th component.
		
		If $ A $ has no eigenvalue with magnitude larger than $ 1 $, then the right hand side of \eqref{estdis} is zero and the inequality already holds for any capacity. Otherwise, let $ d \in \{1,\dots,n\} $ denote the number of eigenvalues with magnitude larger than $ 1 $, including repeats. From now on, we will only consider the unstable subsystem, as the stable part plays no role in the analysis. Picking
		\begin{align}
		\epsilon \in \bigg(0, 1-\max_{i:\lambda_i|>1} |\lambda_i|^{-1}\bigg), \label{eps}
		\end{align}	
		arbitrary $ \tau \in \mathbb{N} $, and dividing the interval $ [-l,l] $ on the $ i $-th axis into 
		\begin{align}
		k_i:=\lfloor|(1-\epsilon)\lambda_i|^{\tau}\rfloor, \, i \in \{1,\dots,d\} \label{kk}
		\end{align}
		equal subintervals of length $ 2l/k_i $. Let $ p_i(s),\, s=\{1,\dots,k_i \}$ denote the midpoints of the subintervals and inside each subinterval construct an interval $ \mathbf{I}_i(s) $ centered at $ p_i(s) $ with a shorter length of $ l/k_i $. A hypercuboid family is defined as below 
		\begin{align}
		\mathscr{H}&=\bigg\{\bigg( \prod_{i=1}^d \mathbf{I}_i(s_i)\bigg) :  s_i \in\{1,\dots,k_i\}, i \in \{1,\dots,d\}\bigg\}, \label{hyper}
		\end{align}
		in which any two hypercuboids are separated by a distance of $ l/k_i $ along the $ i $-th axis for each $ i \in \{1,\dots,d\}$. Now, consider an initial point with range $  \llbracket X(0)\rrbracket =\cup_{\mathbf{L}\in \mathscr{H}} \mathbf{L} \subset \mathbf{B}_l(0) \subset \mathbb{R}^d$.
		
		Let diam($ \cdot $) denote the set diameter under the $ l_{\infty} $ norm; by hypothesis, $ \exists \phi>0 $ such that
		\begin{align}
		\text{diam}  &\llbracket E_j(t)\rrbracket \geq \text{diam} \llbracket E_j(t)| q(0:t-1)\rrbracket \label{condi}\\
		=&\text{diam} \llbracket X_j(t)-\eta_j(t, q(0:t-1))| q(0:t-1)\rrbracket \nonumber\\
		=&  \text{diam} \llbracket  A^t_jX_j(0)+\sum_{i=0}^{t} A_j^{t-1-i}V_j(i)| q(0:t-1)\rrbracket \label{condi2}\\
		\geq&  \text{diam} \llbracket  A^t_jX_j(0)| q(0:t-1)\rrbracket \label{reduc}\\
		\geq& \sup_{u,v \in \llbracket  X_j(0)| q(0:t-1)\rrbracket}\frac{\norm{A^t_j(u-v)}_2}{\sqrt{n}}\nonumber\\
		\geq& \sup_{u,v \in \llbracket  X_j(0)| q(0:t-1)\rrbracket}\frac{\sigma_{min}(A^t_j)\norm{u-v}_2}{\sqrt{n}}\nonumber\\
		\geq& \sigma_{min}(A^t_j)\frac{\text{diam} \llbracket X_j(0)| q(0:t-1)\rrbracket}{\sqrt{n}}\label{diamf},
		\end{align}
		where $ \sigma_{min}(\cdot) $ denotes smallest singular value. \eqref{condi} holds since conditioning reduces the range \cite{nair2013nonstochastic}. \eqref{condi2} follows from the fact that translating does not change the range. It is reasonably easy to see that sum of two unrelated uncertain variables have larger range than their individual range which results \eqref{reduc}.
		
		Using Yamamoto identity, $ \exists t_{\epsilon} \in \mathbb{N} $ such that 
		\begin{align}
		\sigma_{min}(A^t_j) \geq (1-\frac{\epsilon}{2})^t |\lambda_{min}(A_j)|^t, \;j=\{1,\dots,m\}, t\geq t_{\epsilon}. \label{yamam}
		\end{align}
		By hypothesis, $ \exists \phi>0  $, such that 
		\begin{align}
		\phi \geq& \sup \llbracket \norm{E(t)}\rrbracket\nonumber\\
		\geq& \sup \llbracket \norm{E_j(t)}\rrbracket\nonumber\\
		\geq& 0.5 \text{diam}\sup \llbracket E_j(t)\rrbracket\nonumber\\
		\geq& \bigg((1-\frac{\epsilon}{2})|\lambda_{min}(A_j)|\bigg)^t\frac{\text{diam} \llbracket X_j(0)| q(0:t-1)\rrbracket}{2\sqrt{n}}.
		\end{align}
		Now, we show that for large enough $ \tau $, the hypercuboid family $ \mathscr{H} $	\eqref{hyper} is an $ \llbracket X(0)| q(0:\tau-1)\rrbracket $-overlap isolated partition of $ \llbracket X(0)\rrbracket $. By contradiction, suppose that $ \exists \mathbf{L}\in \mathscr{H} $ that is overlap connected in $ \llbracket X(0)| q(0:\tau-1)\rrbracket $ with another hypercuboid in $ \mathscr{H} $. Thus there exists a conditional range $ \llbracket X(0)| q(0:\tau-1)\rrbracket $ containing both a point $ u_j\in \mathbf{L} $ and a point $ v_j $ in some $ \mathbf{L}' \in \mathscr{H}\backslash \mathbf{L}$. Henceforth
		\begin{align}
		\norm{u_j-v_j}\leq& \text{diam}\llbracket X_j(0)| q(0:\tau-1)\rrbracket\nonumber\\
		\leq& \frac{2\sqrt{n}\phi}{((1-{\epsilon /2})|\lambda_{min}(A_j)|)^{\tau}} \nonumber\\
		&\;j=\{1,\dots,m\}, \tau \geq t_{\epsilon}\label{error}
		\end{align}
		Notice that, by construction any two hypercuboid in $ \mathscr{H} $ are separated by a distance of $ l/k_i $, which implies
		\begin{align*}
		\norm{u_j-v_j} &\geq \frac{l}{k_i}\\
		&=\frac{l}{\lfloor(1-\epsilon)|\lambda_i|\rfloor^{\tau}}\\
		&\geq \frac{l}{|(1-\epsilon)|\lambda_{min}(A_j)||^{\tau}}
		\end{align*}
		The right hand side of this equation would exceed the right hand side of \eqref{yamam}, when $ \tau  $ is large enough that $ \big( \frac{1-\epsilon/2}{1-\epsilon}\big)^{\tau}> 2\sqrt{n}\phi/l $, yielding a contradiction.
		
		Therefore, for sufficiently large $ \tau $, no two sets of $ \mathscr{H} $ are $ \llbracket X(0)| q(0:\tau-1)\rrbracket $-overlap connected. So,
		\begin{align}
		I_*[X(0):&Q(0:\tau-1)]=\log|\llbracket X(0)| Q(0:\tau-1)\rrbracket_*|\nonumber\\
		&\geq \log |\mathscr{H}|\nonumber\\
		&=\log \bigg(\prod_{i=1}^d k_i=\prod_{i=1}^d\lfloor |(1-\epsilon)\lambda_i|^{\tau} \rfloor\bigg)\nonumber\\	
		&\geq \log \bigg( \prod_{i=1}^d 0.5 |(1-\epsilon)\lambda_i|^{\tau}\bigg) \label{half}\\
		&=\log \bigg(2^{-d}(1-\epsilon)^{d\tau}|\prod_{i=1}^d\lambda_i|^{\tau}\bigg)\nonumber\\
		&=\tau \bigg( d\log(1-\epsilon) -\frac{d}{\tau}\log 2+\sum_{i=0}^d\log|\lambda_i| \bigg),\label{istar}
		\end{align}
		where \eqref{half} holds since $ \lfloor x \rfloor>x/2, \forall x>1$. Furthermore, condition A3 implies $ X(0) \leftrightarrow S(0 :\tau) \leftrightarrow Q(0:\tau) $. Hence,
		\begin{align*}
		I_*[X(0);Q(0:\tau-1)] &\leq I_*[X(0:\tau-1);Q(0:\tau-1)] \\
		&\leq I_*[S(0:\tau-1);Q(0:\tau-1)] \\
		&< \tau C_0 +m.
		\end{align*}
		Considering this and \eqref{istar} yields
		\begin{align*}
		C_0 > d\log(1-\epsilon) + \sum_{i=0}^d\log|\lambda_i|-\frac{d}{\tau}\log 2-\frac{m}{\tau}.
		\end{align*}
		By letting $ \tau \rightarrow \infty $ and the fact that $ \epsilon $ can be made arbitrarily small, concludes the proof of necessity.
		
		
		{\em 2) Sufficiency}: Let $ P:= \sum_{|\lambda_i|\geq 1} \log|\lambda_i|$, by \eqref{estdis} and \eqref{c0def},
		$ \forall \delta \in (0,C_0-P), \, \exists t_{\delta} >0 $ such that $ \forall \tau> t_{\delta} $, there is a zero-error code book $ \mathcal{F} \subseteq\mathcal{X}^{\tau} $, thus
		\begin{align}
		P<C_0 -\delta \leq \frac{1}{\tau} \log|\mathcal{F}|.\label{pc0}
		\end{align}
		Down-sample \eqref{lti1}-\eqref{lti2} by $ \tau $, the equivalent LTI system is
		\begin{align}
		X((k+1)\tau)&=A^{\tau}X(k\tau)+U'_{\tau}(k)+V'_{\tau}(k), \label{ltis}\\
		Y(k\tau)&=CX(k\tau)+W(k\tau), \, k\in \mathbb{Z}_{\geq0} \label{ltio}
		\end{align}
		where the accumulated control term $ U'_{\tau}(k)=\sum_{i=0}^{r} A^{\tau-1-i}BU(k\tau +i) $ and disturbance term $ V'_{\tau}(k)=\sum_{i=0}^{r} A^{\tau-1-i}V(k\tau +i) $ can be shown to be uniformly bounded over $  k \in \mathbb{Z}_{\geq0} $ for each $ r \in [0:\tau-1] $. By \eqref{pc0}, $ |\mathcal{F}| $ codewords can be transmitted for which satisfies $ \log|\mathcal{F}|> \tau P $. By the "data rate theorem" for LTI systems with bounded disturbances over error-less channels (see e.g. \cite{nair2007feedback}) then there exists a coder-estimator for the equivalent LTI system of \eqref{ltis}-\eqref{ltio} with uniformly bounded estimation error for $ k \in \mathbb{Z}_{\geq0} $.
		It readily gives the  uniformly boundedness for every $ t \in \mathbb{Z}_{\geq0} $ of the \eqref{lti1}-\eqref{lti2}.
	\end{proof}

	%
	\bibliographystyle{IEEEtran}
	\bibliography{ieeetranb}
	
\end{document}